\theoremstyle{plain}
\newtheorem{theorem}{Theorem}[section]
\newtheorem{proposition}[theorem]{Proposition}
\newtheorem{lemma}[theorem]{Lemma}
\theoremstyle{definition}
\theoremstyle{remark}
\icmltitlerunning{Reducing Popularity Bias in Recommender Systems through AUC-Optimal Negative Sampling}
\begin{document}

\onecolumn
\icmltitle{Reducing Popularity Bias in Recommender Systems through AUC-Optimal Negative Sampling}




\begin{icmlauthorlist}
\icmlauthor{Bin Liu}{sch}
\icmlauthor{Eerjia Chen}{sch}
\icmlauthor{Bang Wang*}{sch}
\end{icmlauthorlist}

\icmlaffiliation{sch}{School of Electronic Information and Communications, Huazhong University of Science and Technology (HUST), Wuhan, China}

\icmlcorrespondingauthor{Bang Wang}{wangbang@hust.edu.cn}


\vskip 0.3in




\begin{abstract}
Popularity bias is a persistent issue associated with recommendation systems, posing challenges to both fairness and efficiency. Existing literature widely acknowledges that reducing popularity bias often requires sacrificing recommendation accuracy. In this paper, we challenge this commonly held belief. Our analysis under general bias-variance decomposition framework shows that reducing bias can actually lead to improved model performance under certain conditions. To achieve this win-win situation, we propose to intervene in model training through negative sampling thereby modifying model predictions. Specifically, we provide an optimal negative sampling rule that maximizes partial AUC to preserve the accuracy of any given model, while correcting sample information and prior information to reduce popularity bias in a flexible and principled way. Our experimental results on real-world datasets demonstrate the superiority of our approach in improving recommendation performance and reducing popularity bias~\footnote{School of Electronic Information and Communications, Huazhong University of Science and Technology (HUST), Wuhan, China. \\*Corresponding to: wangbang@hust.edu.cn \\Bin Liu and Erjia Chen contributed equally.}.
\end{abstract}
\section{Introduction}
Recommendation systems have achieved remarkable results in application scenarios such as online shopping and intelligent search. In recent years, research on recommendation systems has primarily focused on designing better models to mine user behavior data and improve recommendation quality. However, bias is a persistent issue associated with recommendation systems and poses challenges to both fairness and efficiency~\cite{Schnabel2016:PMLR,Chen2023:TOIS,Deldjoo2023:UMUAI}. There are two main types of biases: \textit{model-independent bias}, which is and mainly derived from the dataset such as selection bias~\cite{Ovaisi2020:WWW,chen2021autodebias}; and \textit{model-dependent bias}, which is dependent on the model, such as popularity bias~\cite{Abdollahpouri:2019:AAAI, Abdollahpouri:2019:arxiv}, position bias~\cite{Guo2019:RecSys} and exposure bias~\cite{Gupta2021:PMLR}, as the fundamental idea of collaborative filtering algorithms is to learn individual preferences based on group tastes. Due to the complex and diverse nature of biases, reducing recommendation bias remains a exciting challenge.

Popularity bias is a typical outcome of combined effects~\cite{Guo2019:RecSys,Ovaisi2020:WWW,Gupta2021:PMLR,Zhu:2021:SIGKDD} of model-dependent and model-independent biases. Users' selection bias caused a long-tail distribution of item popularity in the dataset, which is further amplified by collaborative filtering algorithms. This situation creates feedback loop~\cite{Zhu:2021:SIGKDD} that degrades the performance and fairness of recommendations, resulting in popular items being excessively prioritized and the preferences of niche groups being ignored. 

Reducing popularity bias often poses a dilemma: excessively reducing the recommendation of popular items may result in underestimating users' preferences for them, despite the fact that popular items may indeed be their preferred choices. The crucial issue lies in accurately estimating the preferences for popular items, in order to align them with users' actual preferences. Typically, this problem can be approached from three perspectives: (i) post-ranking methods for modifying predicted preference values ~\cite{Steck:2018:RecSys,Abdollahpouri:2019:arxiv,Steck:2019:arxiv,Zhu:2021:KDD,Zhu:2021:WSDM,Liu:2017:SIGIR}, (ii) empirical risk rewriting methods for re-weighting samples ~\cite{Schnabel:2016:PMLR,Bottou:2013:JMLR,Gruson:2019:WSDM,Joachims:2017:IJCAI,Yang:2018:RecSys,Saito:2020:WSDM}, and (iii) causal inference methods for disentangling the causal relationship between item popularity and exposure rate ~\cite{Bonner:2018:RecSys,Zhang:2021:SIGIR,Wang:2021:SIGDKK,Wei:2021:SIGDKK,Zheng:2021:WWW,Liu:2020:SIGIR,Zhan:2022:SIGKDD,Xu:2021:arXiv}. Nevertheless, existing methods have the following drawbacks: (i) model and data bias are often treated separately, whereas in reality, they tend to interact with each other and form feedback loops; (ii) complex models are often introduced, which degrade the model's generalization performance.

In this paper, we propose a flexible and lightweight approach to reduce popularity bias without compromising the recommendation accuracy of the given model or introducing additional complexity. The basic idea is to intervene in model training by using negative sampling as a pre-step to compute contrastive loss, thereby modifying model prediction. Specifically, we correct sample information that targeted on model-dependent bias, and prior information that targeted on model-independent bias for guiding negative sampling, which maximizes partial AUC to preserve the accuracy of any given model. In summary, the contributions of this paper can be summarized as follows:
\begin{itemize}
	\item We challenge the commonly held belief that reducing popularity bias results in a trade-off with model performance. Through the analysis under general bias-variance decomposition framework, we demonstrate that bias reduction and model performance can be a win-win situation under certain conditions.
	
	\item We demonstrate how bias is related to the empirical estimates of false positive rate (FPR) and false negative rate (FNR). Additionally, by refining partitioning based on the popularity of items, we provide a comprehensive metric for measuring popularity bias, including overestimated hot items, overestimated cold items, underestimated hot items, and underestimated cold items.
	
	\item We provide the optimal negative sampling rule in terms of partial AUC maximization. It is achieved by modifying sample information and prior information that targeting on model-dependent and model-independent bias, which offers a flexible and principled way to reduce popularity bias in a lightweight manner without compromising the recommendation accuracy of the given recommendation model.
\end{itemize}

\section{Understanding Popularity Bias}
Denote an user item pair $(u,i)$ as a sample $\mathbf{x}$, where $u\in \mathcal{U}, i\in \mathcal{I}$. Let $\mathcal{X}= \{\mathbf{x}|u\in \mathcal{U}, i\in \mathcal{I}\} $ be the sample space indicating all the user item pairs and $\mathcal{Y} \subseteq \mathbb{R}$ be the output space indicating all the user preference. We assume the training data $\mathcal{D}=\{(x_i,y_i), i=1,2,...,n \}$ are i.i.d drawn from the unknown joint distribution $p(x,y)$ on $\mathcal{Z}=\mathcal{X}\times\mathcal{Y} $. A decision function $f:\mathcal{X} \rightarrow \mathbb{R}$ assigns a real valued preference level indicating the predicted preference value $f(\mathbf{x}) \in \mathbb{R}$. In this section, we first demonstrate how the model performance and bias (therefor popularity bias) are related. On the biases of which we introduce how to measure popularity bias. 

Considering the squared loss $(g(\mathbf{x}) - y)^2$ between the model prediction $g(\mathbf{x})$ and the ground truth label for a single data point $\mathbf{x}$. Since $g(\mathbf{x})$ is trained on training data set $\mathcal{D}$, we write it as $g(x;\mathcal{D}) $, which is a function of particular training set $\mathcal{D}$, indicating the impact of different training sets $\mathcal{D}$ on predictive values. 
Then, the expected squared loss over different data points $\mathbf{x}$ and training set $\mathcal{D}$ can be calculated as
\begin{eqnarray}
	\label{eq:sqarederror}
	\textsc{Mse}_{\mathcal{D}} =	\int_{\mathbf x\in \mathcal{X}}\int_{\mathcal{D}}(g(\mathbf{x};\mathcal{D}) - y)^2p(\mathbf{x},y)p(\mathcal{D})d\mathbf{x}d\mathcal{D},
\end{eqnarray}
which measures the model performance, $\textsc{Mse}_{\mathcal{D}} \rightarrow 0$ as $g(\mathbf{x};\mathcal{D}) = y$ almost surely for any $\mathbf{x}$ and $\mathcal{D}$. 

The popularity bias refers to the tendency of collaborative filtering models to excessively recommend popular items or overestimate users' preference for popular items, as observed in many datasets. Formally, it can be measured by the extent to which the average prediction over all datasets $\mathbb{E}_\mathcal{D}[g(\mathbf{x};\mathcal{D})]$ differs from the optimal prediction $y^*$.
\begin{eqnarray}
	\textsc{Bias}_\mathcal{D} = \int_{x\in \mathcal{X}}p(\mathbf{x},y)\int_{\mathcal{D}}\{\mathbb{E}_\mathcal{D}[g(\mathbf{x};\mathcal{D})] - y^*\}^2p(\mathcal{D})d\mathcal{D}d\mathbf{x}  \label{eq:bias}.
\end{eqnarray}
Mathematically, as described in the equation above, even after iterating through all possible training sets $\mathcal{D}$, the model's average prediction cannot approach the optimal prediction $y^*$. This represents a systemic error of the model, known as bias.

It is generally believed that reducing popularity bias would lead to a decrease in model performance. However, the following theorem provides a counterintuitive yet interesting conclusion: the relationship between model performance and systemic bias is not trade-off, but rather a positive correlation.
\begin{proposition}\label{theorem:bias_mse}
	Denote $y^* =\mathbb E_y [y|\mathbf{x}]$ as the Bayesian optimal prediction over all measurable functions. For  fixed complexity of the functional class $\mathcal{G}$ and joint distribution $p(x,y)$ on $\mathcal{Z}=\mathcal{X}\times\mathcal{Y}$, assuming the noise $y^*-y$ is zero-mean, then
	\begin{eqnarray}\label{eq:biaseq}
		\textsc{Bias}_\mathcal{D} &=& \textsc{Mse}_{\mathcal{D}}  - const 
	\end{eqnarray}
\end{proposition}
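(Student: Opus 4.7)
The plan is to invoke the classical bias--variance--noise decomposition and to absorb into $const$ the variance and noise terms that do not appear in $\textsc{Bias}_\mathcal{D}$. First I would insert the two natural pivots $\bar g(\mathbf{x}) := \mathbb{E}_\mathcal{D}[g(\mathbf{x};\mathcal{D})]$ and $y^{*}=\mathbb{E}_y[y\mid\mathbf{x}]$ inside the squared loss and write
\[
(g(\mathbf{x};\mathcal{D})-y)^{2}=\bigl((g-\bar g)+(\bar g-y^{*})+(y^{*}-y)\bigr)^{2},
\]
suppressing the arguments of $g$ and $\bar g$ on the right-hand side.

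Next I would expand the square and take the double expectation under $p(\mathbf{x},y)\,p(\mathcal{D})$. The three squared terms produce, respectively, a variance $\mathrm{Var}_\mathcal{D}:=\mathbb{E}_{\mathbf{x},\mathcal{D}}[(g-\bar g)^{2}]$, the quantity $\textsc{Bias}_\mathcal{D}$ as defined in the excerpt, and an irreducible noise $N:=\mathbb{E}_{\mathbf{x},y}[(y^{*}-y)^{2}]$. The three cross terms must all vanish: the $(g-\bar g)(\bar g-y^{*})$ cross term vanishes after integration over $\mathcal{D}$ since $\mathbb{E}_\mathcal{D}[g-\bar g]=0$; the $(\bar g-y^{*})(y^{*}-y)$ cross term vanishes after integration over $y$ conditioned on $\mathbf{x}$ by the zero-mean noise assumption; and the $(g-\bar g)(y^{*}-y)$ cross term vanishes by combining independence of the training set $\mathcal{D}$ from the test point $(\mathbf{x},y)$ with the previous two facts. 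This yields the pointwise identity $\textsc{Mse}_\mathcal{D}=\mathrm{Var}_\mathcal{D}+\textsc{Bias}_\mathcal{D}+N$.

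To close the argument I would set $const:=\mathrm{Var}_\mathcal{D}+N$. The noise $N$ depends only on $p(\mathbf{x},y)$, which is fixed by hypothesis, and $\mathrm{Var}_\mathcal{D}$ depends only on $p(\mathbf{x},y)$ together with the capacity of the functional class $\mathcal{G}$, both of which the hypothesis pins down. Rearranging then gives $\textsc{Bias}_\mathcal{D}=\textsc{Mse}_\mathcal{D}-const$, as claimed.

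The main obstacle, in my view, is not the algebra but justifying the final step. Declaring $\mathrm{Var}_\mathcal{D}$ a constant is delicate because variance plainly depends on the particular predictor produced by the learning procedure and not merely on the capacity of $\mathcal{G}$. I would therefore want to state explicitly that the identity should be read as holding for a single learner held fixed---so that an intervention (such as the authors' negative-sampling scheme) that lowers $\textsc{Bias}_\mathcal{D}$ of that learner while leaving its variance essentially unchanged also lowers $\textsc{Mse}_\mathcal{D}$ by the same amount---rather than as a universal comparison across different learners inside $\mathcal{G}$.
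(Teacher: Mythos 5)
Your proof is correct and follows essentially the same route as the paper's: the classical bias--variance--noise decomposition, with the cross terms killed by $\mathbb{E}_\mathcal{D}[g-\bar g]=0$ and the zero-mean noise assumption, and with $const=\mathrm{Var}_\mathcal{D}+N$ absorbed under the ``fixed complexity'' hypothesis (the paper merely performs the expansion in two nested stages rather than one three-term square). Your closing caveat about the variance term is well taken --- the paper asserts its constancy with no more justification than you give --- so there is no gap relative to the published argument.
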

\begin{proof}
	See Appendix~\ref{appendix:bias_mse}.
\end{proof}

\subsection{Measure Popularity Bias}\label{sec:popmetric}
Note that the definition of bias, i.e., systematical error, in Eq~\eqref{eq:bias}, the expectation is taken over different $\mathcal{D}$, it is difficult to obtain the empirical estimate of Eq~\eqref{eq:bias} since only one $\mathcal{D}$ is available in practice. Fortunately, with the result of Proposition~\ref{theorem:bias_mse}, we can relate the bias with false positive rate(FPR), false negative rate(TPR), respectively.
\begin{proposition}\label{theorem:metric}
	Let $g(\mathbf{x}) \in \{0,1\}$ and $y \in \{0,1\}$, that is, simplifying implicit collaborative filtering to a classification problem. With the result of proposition~\ref{theorem:bias_mse}, we have
	\begin{eqnarray}
		bias \simeq \textsc{FPR}+\textsc{FNR}
	\end{eqnarray}
	\begin{proof}
		See Appendix~\ref{appendix:metric}.
	\end{proof}
\end{proposition}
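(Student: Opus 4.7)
The plan is to chain Proposition~\ref{theorem:bias_mse} with the fact that squared loss collapses to 0/1 loss on $\{0,1\}$-valued arguments, and then decompose the misclassification event by the true label.

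First I would invoke Proposition~\ref{theorem:bias_mse} to replace $\textsc{Bias}_\mathcal{D}$ with $\textsc{Mse}_\mathcal{D}$ up to an additive constant, reducing the claim to an analysis of the MSE alone. Next, since $g(\mathbf{x}), y \in \{0,1\}$, the identity $(g(\mathbf{x}) - y)^2 = \mathbb{1}[g(\mathbf{x}) \neq y]$ holds pointwise, so integrating the squared loss against $p(\mathbf{x},y)p(\mathcal{D})$ simply produces the misclassification probability $P(g(\mathbf{x}) \neq y)$. This is the step that makes the connection to classification-style error rates possible at all.

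Then I would partition the misclassification event by the true label and use the definitions of the false positive and false negative rates:
\begin{align*}
\textsc{Mse}_{\mathcal{D}} \;&=\; P(g(\mathbf{x}) \neq y) \\
&=\; P(g=1, y=0) + P(g=0, y=1) \\
&=\; \pi_0 \,\textsc{FPR} \;+\; \pi_1 \,\textsc{FNR},
\end{align*}
where $\pi_c = P(y=c)$. Combined with Proposition~\ref{theorem:bias_mse}, this already gives a linear combination form $\textsc{Bias}_\mathcal{D} = \pi_0\,\textsc{FPR} + \pi_1\,\textsc{FNR} - \mathrm{const}$.

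To reach the stated approximate equality $\textsc{bias}\simeq\textsc{FPR}+\textsc{FNR}$, I would absorb the class priors in one of two ways: either by the implicit balanced-class convention $\pi_0=\pi_1=1/2$ (folding the overall factor into the constant of Proposition~\ref{theorem:bias_mse}), or, more in the spirit of the paper, by arguing that when the goal is to measure \emph{bias} one should weight the two error types symmetrically so that errors on the minority / long-tail side are not drowned out by the $\pi_0 \gg \pi_1$ imbalance typical of implicit collaborative filtering. I expect this last step to be the main obstacle: in implicit CF the classes are grossly imbalanced, so the symbol ``$\simeq$'' genuinely hides a modeling choice rather than a strict equality, and the cleanest justification is that the equal weighting is the appropriate convention for quantifying popularity bias uniformly across the hot/cold partition introduced in Section~\ref{sec:popmetric}.
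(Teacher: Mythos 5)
Your proposal follows essentially the same route as the paper: invoke Proposition~\ref{theorem:bias_mse} to reduce $\textsc{Bias}_\mathcal{D}$ to $\textsc{Mse}_\mathcal{D}$ up to an additive constant, observe that on $\{0,1\}$-valued arguments the squared loss collapses to the 0/1 loss split by error direction ($\mathbb{I}(g>y)+\mathbb{I}(g<y)$ in the paper's notation), and identify the two resulting terms with FPR and FNR. The only difference is that you make the class priors $\pi_0,\pi_1$ explicit and correctly flag their absorption as the step where ``$\simeq$'' does real work, whereas the paper buries exactly the same move inside its empirical-estimate step by normalizing each term over the class-conditional sets $\mathcal{S}_n$ and $\mathcal{S}_p$ instead of the whole sample space --- your account is, if anything, the more honest one about where the approximation lives.
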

\begin{figure}[h!]
	\centering
	\includegraphics[scale=0.5]{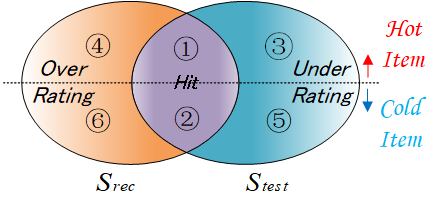}
	\caption{Interpretation of the four popularity bias metrics, which corresponds to over-recommended hot item rate (OHR=$\frac{\textcircled{4}}{\textcircled{4}+\textcircled{1}}$), over-recommended cold item rate (OCR=$\frac{\textcircled{6}}{\textcircled{6}+\textcircled{2}}$), under-recommended hot item rate (UHR=$\frac{\textcircled{3}}{\textcircled{3}+\textcircled{1}}$), and under-recommended cold item rate (UCR=$\frac{\textcircled{5}}{\textcircled{5}+\textcircled{2}}$).}
	\label{Fig:openball}
\end{figure}
Consequently, bias can be empirically estimated over samples $\mathbf{x}$ instead of $\mathcal{D}$. As the popularity of recommended (exposed) items $\mathcal{S}_\textsc{Rec}$ is our primary concern, we propose to  refined partition the recommended (exposed) items $\mathcal{S}_\textsc{Rec}$ based on their popularity, and use following four metrics: \textbf{o}ver-recommended \textbf{h}ot item \textbf{r}ate, \textbf{u}nder-recommended \textbf{h}ot item \textbf{r}ate, \textbf{o}ver-recommended \textbf{c}old item \textbf{r}ate,
\textbf{u}nder-recommended \textbf{c}old item \textbf{r}ate, to evaluate the popularity bias
\begin{eqnarray}
	\textsc{OHR}= \frac{\{\mathcal{S}_\textsc{Rec}-\mathcal{S}_\textsc{Test}\} \cap \mathcal{X}_\textsc{Hot} }{|\mathcal{S}_\textsc{Rec} \cap \mathcal{X}_\textsc{Hot}|} \label{eq:ORH}~~~
	\textsc{UHR} = \frac{\{\mathcal{S}_\textsc{Test}-\mathcal{S}_\textsc{Rec}\} \cap \mathcal{X}_\textsc{Hot} }{|\mathcal{S}_\textsc{Rec} \cap \mathcal{X}_\textsc{Hot}|} \label{eq:URH}\\
	\textsc{OCR} =\frac{\{\mathcal{S}_\textsc{Rec}-\mathcal{S}_\textsc{Test}\} \cap \mathcal{X}_\textsc{Cold} }{|\mathcal{S}_\textsc{Rec} \cap \mathcal{X}_\textsc{Cold}|} \label{eq:ORC}~~~
	\textsc{UCR} =\frac{\{\mathcal{S}_\textsc{Test}-\mathcal{S}_\textsc{Rec}\} \cap \mathcal{X}_\textsc{Cold} }{|\mathcal{S}_\textsc{Rec} \cap \mathcal{X}_\textsc{Cold}|} \label{eq:URC}
\end{eqnarray}
$\mathcal{X}_\textsc{Hot} = \{\mathbf{x}|(u,i) \in \mathcal{X}, pop_i \geq \tau  \} $ is the set of samples with item popularity greater than a predefined threshold $\tau$ according to specific scenarios,  $\mathcal{X}_\textsc{Cold} = \mathcal{X} - \mathcal{X}_\textsc{Hot}$. $\mathcal{S}_\textsc{Rec}-\mathcal{S}_\textsc{Test}$ is the set of items recommended to the user that they do not like, i.e., over-recommended false positive items. $\mathcal{S}_\textsc{Test}-\mathcal{S}_\textsc{Rec}$ is the set of items that the user likes but were not recommended to them, i.e., under-recommended false negative items. The fundamental philosophy behind the popularity bias metrics is not to discourage recommending popular items, but to avoid introducing excessive OHR caused by incorrectly recommending hot items. The key is to accurately estimate user preferences towards popular items, in order to match their true preferences and avoid overestimation or underestimation.

\section{Method}
Ranking-oriented or contrastive-based learning algorithms require positive-negative sample pairs to calculate the loss. Different negative sampling strategies determine which samples are pushed apart (and thus not exposed to the user). Therefore, we can intervene in the model training using negative sampling strategies to achieve end-to-end training and reduce popularity bias. 

\subsection{Empirical partial AUC Maximization}
For implicit collaborative filtering tasks, the top-k evaluation of personalized ranking task can be formalized as a partial AUC optimization problem~\cite{narasimhan2013structural,iwata2020semi,yang2021all,yang2022optimizing,zhu2022auc}, aiming at achieving a high
true positive rate  with a specific false positive rate range $[0,\gamma]$~\cite{dodd2003partial,yang2022auc}:
\begin{eqnarray}\label{eq:auc}
	pAUC(0,\gamma) = \frac{1}{|\mathcal{D}^+||\mathcal{N}|}\sum_{\mathbf{x}^+ \in \mathcal{D}^+}\sum_{\mathbf{x}^- \in \mathcal{N}} \mathbb{I}(g(\mathbf{x}^+) - g(\mathbf{x}^-))
\end{eqnarray}
where $\mathcal{N}\subseteq \mathcal{D}^-$ denotes a subset of negative samples ranked in the top $\lfloor \gamma\cdot|\mathcal{D}^-| \rfloor$-th position among negatives in a descending order of scores $g(\mathbf x)$. When taking $\gamma=1$, $pAUC(0,1)$ is standard AUC with false positive rate ranged in $[0,1]$.

Given each positive instance $\mathbf{x}^+$, a randomly sampled negative instance $\mathbf{x}$ was directly assigned with the negative gradient, resulting the \textit{minus effect} of $g(\mathbf{x})$. The following theorem states that negative sampling problem  can be reformulated as an AUC maximization problem that choosing unlabeled instance $\mathbf x$ to maximize partial AUC metric.
\begin{theorem}[Partial AUC Maximization]\label{theorem:auc}
	Denote the posterior probability of $\mathbf{x}\in \textsc{Tn}$ as $ \mathbb{P}(\textsc{Tn}|\mathbf{x})$. Given training set $\mathcal{D}= \mathcal{D}^+ \cup \mathcal{D}^-$, for each unit decrement $dg(\mathbf{x})=-1$, the following negative sampling strategy  
	\begin{eqnarray}
		\mathbf{x} 
		&=& \arg \max _\mathbf{x} \triangle_{\mathbf{x}^+}\cdot \mathbb{P}(\textsc{Tn}|\mathbf{x}) -\triangle_{\mathbf{x}^-}\cdot \mathbb{P}(\textsc{Fn}|\mathbf{x}) \label{eq:rule}
	\end{eqnarray}
	is the optimal sampling rule that maximizes the partial AUC metric over training set $\mathcal{D}$, where $\triangle_{\mathbf{x}^+}=\sum_{\mathbf{x}^+ \in \mathcal{D}^+}[1- \sigma(g(\mathbf{x}^+) - g(\mathbf{x}))]$ and $\triangle_{\mathbf{x}^-} = \sum_{\mathbf{x}^-\in \mathcal{N}} [1- \sigma(g(\mathbf{x}) - g(\mathbf{x}^-))]$, and $ \mathcal{N}\subseteq \mathcal{D}^-$ is the set of top-$\lfloor \gamma\cdot|\mathcal{D}^-| \rfloor$ ranked negative instances.
	\begin{proof}
		See Appendix~\ref{Appendix:auc}.
	\end{proof}
\end{theorem}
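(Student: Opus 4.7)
The plan is to replace the indicator $\mathbb{I}(\cdot)$ in the definition of $pAUC(0,\gamma)$ with a smooth BPR-style surrogate $\log\sigma(g(\mathbf{x}^+)-g(\mathbf{x}^-))$ and then perform a first-order sensitivity analysis with respect to $g(\mathbf{x})$ for a candidate unlabeled sample $\mathbf{x}$. The explicit form of $\triangle_{\mathbf{x}^+}$ and $\triangle_{\mathbf{x}^-}$ is a strong hint: $1-\sigma(z) = -\partial \log\sigma(z)/\partial z$ is exactly the per-pair gradient of the BPR log-likelihood, so these two quantities are already the accumulated pairwise gradients coming from the positive and negative sides evaluated at $\mathbf{x}$.

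Next, since $\mathbf{x}$ is unlabeled, I would condition on its (unknown) true label. If $\mathbf{x}$ is a true negative (\textsc{Tn}), it should act on the negative side against each $\mathbf{x}^+\in\mathcal{D}^+$; the unit decrement $dg(\mathbf{x})=-1$ then increases the surrogate pAUC by $\sum_{\mathbf{x}^+}[1-\sigma(g(\mathbf{x}^+)-g(\mathbf{x}))] = \triangle_{\mathbf{x}^+}$ to first order. If $\mathbf{x}$ is a false negative (\textsc{Fn}, i.e.\ a hidden positive), it should have acted on the positive side against each $\mathbf{x}^-\in\mathcal{N}$; the same decrement then decreases the surrogate pAUC by $\sum_{\mathbf{x}^-}[1-\sigma(g(\mathbf{x})-g(\mathbf{x}^-))] = \triangle_{\mathbf{x}^-}$. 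Taking expectation over the label yields a net per-unit expected gain $\triangle_{\mathbf{x}^+}\,\mathbb{P}(\textsc{Tn}|\mathbf{x}) - \triangle_{\mathbf{x}^-}\,\mathbb{P}(\textsc{Fn}|\mathbf{x})$, and the argmax over $\mathbf{x}$ delivers precisely the rule in Eq.~\eqref{eq:rule}.

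The main obstacle I expect is the dependence of the top-$\lfloor\gamma|\mathcal{D}^-|\rfloor$ set $\mathcal{N}$ on the current scores: decrementing $g(\mathbf{x})$ can in principle reshuffle which negatives lie above the cutoff, creating a non-smooth jump that naive differentiation ignores. I plan to handle this by restricting to infinitesimal perturbations and invoking local stability of the ranking around the cutoff, so that $\mathcal{N}$ may be held fixed while differentiating; if one wishes to be more careful, differentiating a smoothed top-$k$ relaxation and then passing to the limit reproduces the same first-order expression. A secondary care point is the sigmoid surrogate itself: because only the argmax of a per-sample contribution is needed, monotonicity of the BPR surrogate is sufficient, and the theorem should be read as optimality of the greedy per-step direction $dg(\mathbf{x})=-1$ rather than a claim of global optimality over $g$.
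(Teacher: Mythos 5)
Your proposal follows essentially the same route as the paper's proof in Appendix~C.3: replace the indicator with the logistic surrogate, take a first-order expansion in $g(\mathbf{x})$, split the derivative into the \textsc{Tn} and \textsc{Fn} cases, average over the unknown label to get $\triangle_{\mathbf{x}^+}\mathbb{P}(\textsc{Tn}|\mathbf{x})-\triangle_{\mathbf{x}^-}\mathbb{P}(\textsc{Fn}|\mathbf{x})$, and take the argmax. Your extra remarks about holding the top-$\lfloor\gamma|\mathcal{D}^-|\rfloor$ set $\mathcal{N}$ fixed under infinitesimal perturbation and reading the result as greedy per-step optimality are careful points the paper glosses over, but they do not change the argument.
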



\subsection{Posterior Probability Estimation}
With the negative sampling rule of partial AUC maximization in Eq~\eqref{eq:rule}, we will present the posterior estimation of $\mathbf{x}\in \textsc{Tn}$. For a decision function $g: \mathcal{X} \rightarrow \mathbb{R} $, we denote the distribution of predicted preference values as $\phi(\hat{x})$, where $\hat{x}$ is short for $ g(x)$. To do this, we first introduce the technique of order statistics as a powerful nonparametric method in many areas of statistical theory and practice.

Consider $n$ random variables from $\phi$ arranged in the ascending order according to their realizations. We write them as $X_{(1)} \le X_{(2)} \le \cdots \le X_{(n)} $, and $X_{(k)}$ is called the $k$-th $(k=1,\cdots,n)$ order statistics~\cite{David:2004}. The \textit{probability density function} (PDF) of $X_{(k)}$ is given by:
\[
\phi_{(k)}(\hat{x}) = \frac{n!}{(k-1)!(n-k)!}\Phi^{k-1}(\hat{x}) \phi(\hat{x}) [1-\Phi(\hat{x})]^{n-k}
\]
Since only two populations are contained in the unlabeled dataset: true negative and false negative, so we take $n=2$
\begin{eqnarray}
	\phi_{(1)}(\hat{x}) = 2\phi(\hat{x}) [1-\Phi(\hat{x})] \label{eq:order1},~~~
	\phi_{(2)}(\hat{x}) = 2\phi(\hat{x}) \Phi(\hat{x}) \label{eq:order2}
\end{eqnarray}
where $\phi_{(1)}$ is the distribution of $X_{(1)}$ and $\phi_{(2)}$ is the distribution of $X_{(2)}$.  

Let $\alpha$ denote the probability that the decision function assigns a higher score to a positive example than a negative example, i.e., $\hat{x}^- \leq \hat{x}^+$ holds with probability $\alpha$, and $\hat{x}^+ \leq \hat{x}^-$ holds with probability $1-\alpha$. In terms of order statistics, the negative example $\hat{x}^-$ follows the distribution of the population $\phi_{(1)}$ with probability $\alpha$, and the distribution of population $\phi_{(2)}$ with probability $1-\alpha$. Thus, the distribution of true negative examples can be expressed as follows:
\begin{eqnarray}
	\phi_{\textsc{Tn}}(\hat{x}) &=& \alpha \phi_{(1)}(\hat{x}) + (1-\alpha)\phi_{(2)}(\hat{x}) \label{eq:phitn} \\
	&=& 2\alpha\phi(\hat{x}) [1-\Phi(\hat{x})]+ 2(1-\alpha)\phi(\hat{x}) \Phi(\hat{x}) \nonumber \\
\end{eqnarray}
Similarly, $\phi_{\textsc{Fn}}$ is the component of $X_{(2)}$ and $X_{(1)}$ with mixture coefficient $\alpha$:
\begin{eqnarray}
	\phi_{\textsc{Fn}}(\hat{x}) =  2\alpha \phi(\hat{x}) \Phi(\hat{x}) +2(1-\alpha)\phi(\hat{x}) [1-\Phi(\hat{x})] \label{eq:phifn}
\end{eqnarray}
So the posterior probability of $\mathbf{x} \in \textsc{Tn}$ can be calculated as 
\begin{eqnarray}\label{eq:posterior}
	\mathbb{P}(\textsc{Tn}|\mathbf{x}) &=& \frac{\phi_{\textsc{Tn}}(\hat{x}) \tau^-}{\phi_{\textsc{Tn}}(\hat{x}) \tau^- + \phi_{\textsc{Fn}}(\hat{x}) \tau^+} \nonumber \\
	&=&  \frac{\alpha\tau^- + (1-2\alpha) \Phi(\hat{x})\tau^-}
	{\alpha\tau^- + (1- \alpha)\tau^+  + (1-2\alpha)\Phi(\hat{x})(\tau^- -\tau^+)} \label{Eq:poster}
\end{eqnarray}
Due to the fractional form of conditional probability, $\phi(\hat x)$ is eliminated and only the commutative distribution function $\Phi(\hat{x})=\int_{-\infty} ^{\hat{x}} \phi(t)dt$  remains, which can be approximated by following empirical distribution function:
\begin{eqnarray}
	\Phi_n (\hat{x}) = \frac{1}{n} \sum_{i=1}^{n}\mathbb{I}_{|X_i \leq \hat{x}|}\label{eq:empiricalcdf}
\end{eqnarray}
Here, $\tau^- = 1-\tau^+$ is the prior probability of $\mathbb P (\mathbf{x} \in \textsc{Tn})$ that user do not prefer the item. This allows for penalizing popular items accordingly
\begin{eqnarray}
	\tau^- &=&\mathbb P (\mathbf{x} \in \textsc{Tn})\propto pop_i^\beta \label{eq:prior}
\end{eqnarray}

\subsection{Interpretation}
From a statistical perspective, $\Phi_n (\hat{x})\in [0,1]$ is the \textit{sample information} that is dependent on the model and reflects the joint probability of the current model classifying $\mathbf{x}$ as a positive example. For a sample with $\Phi_n (\hat{x})=1$, i.e., a top-ranked sample, the model assigns a probability of 1 to it being a positive example. On the other hand, $pop_i$ is the \textit{prior information} that is independent of the model but related to the data (i.e., item popularity). The model-dependent bias is manifested in the sample information, while the model-independent bias is manifested in the prior information. 

\textbf{Modifying sample information}: $\alpha\in[0.5,1]$ modifies the sample information to reduce the model-dependent bias. Specifically, $\alpha$ is the probability  of scoring positive instance higher than a negative instance, which admits  empirical macro-AUC estimate of the current model. For a worst model that  that makes random guesses, $\alpha=0.5$. It controls the \textit{confidence level} of the sample information and thus the degree of interference of sample information on negative sampling. When $\alpha$ approaches 0.5, i.e., indicating a very low confidence level of the sample information, the posterior probability degenerates into the prior probability	$\lim_{\alpha \rightarrow 0.5}\mathbb{P}(\textsc{Tn}|\mathbf{x}) = \tau^-$,
and only prior information is involved in negative sampling as random negative sampling (RNS) or popularity-based negative sampling (PNS). In contrast, when $\alpha$ approaches 1, indicating higher confidence of sample information, items with higher scores or higher ranking positions have higher posterior probabilities and are less likely to be sampled, thus less likely to be penalized as
$\lim_{\alpha \rightarrow 1.0,\Phi(\hat{x}) \rightarrow 1.0}\mathbb{P}(\textsc{Tn}|\mathbf{x}) =  0$.
$\alpha$ is can be used to control biases that originated from the model. For a model more prone to popularity bias, i.e., with low confidence in the sample information, a relative lower value of $\alpha$ should be chosen, thereby reduces the degree of interference of sample information in model training.

\textbf{Modifying prior information}: $\beta\in[0,+\infty]$ modifies the prior information to reduce the model-independent bias. Specifically, $\beta$ is the concentration parameter that controls the probability of popular items being true negatives. When $\beta$ approaches 0, the prior probability approaches uniform distribution. When $\beta$ approaches $\infty$, the prior probability of top popular item approaches 1. With fixed sample information, larger $\beta$ making popularity items more likely to be penalized. 

$\beta$ can be used to control biases that originate from outside the model. For datasets or scenarios that are more prone to popularity bias, a larger value of $\beta$ should be set to penalize popular items that are more likely to be over recommended.

\textbf{Maintaining recommendation accuracy}: $\gamma\in[0,1]$ controls the range of false positive rates that we aim to achieve a high true positive rate within. For top-k evaluation tasks, a smaller value of $\gamma$ should be chosen as the value of k decreases.

\subsection{Implementation}
Since $\triangle_{\mathbf{x}^+}$ and $\triangle_{\mathbf{x}^-}$ involves the summation over all samples, which introduces additional computational cost. To obtain a more practical form, we randomly draw N positive samples $\{ \mathbf{x}^+ _i\}_{i=1}^N$, and  N negative samples $\{ \mathbf{x}^- _j\}_{j=1}^N$ to  empirically estimate $\sum_\mathbf{x}^+$ and $\sum_\mathbf{x}^-$, respectively
\begin{eqnarray}
	\triangle_{\mathbf{x}^+} = |\mathcal{I}_u^+|\cdot \bar{\textsc{Info}}(\mathbf{x}^+) \label{eq:sum_plus}, ~~~
	\triangle_{\mathbf{x}^-} = \gamma|\mathcal{I}_u^-|\cdot \bar{\textsc{Info}}(\mathbf{x}^-)\label{eq:sum_minus}
\end{eqnarray}
where $
\bar{\textsc{Info}}(\mathbf{x}^+) \simeq\frac{1}{N}\sum_{i=1}^{N}[1- \sigma(g(\mathbf{x}^+_i) - g(\mathbf{x}))]  \label{eq:info+}$ and $
\bar{\textsc{Info}}(\mathbf{x}^-) \simeq \frac{1}{N}\sum_{i=1}^{N}[1- \sigma(g(\mathbf{x}) - g(\mathbf{x}^-_j))] \label{eq:info-}
$. Thus calculating $\triangle_{\mathbf{x}^+}$ and $\triangle_{\mathbf{x}^-}$ only takes $\mathcal{O}(N)$, where $N$ is a small constant. Given a small candidate set of negative samples $\mathcal{M}$, the algorithm for proposed negative sampling in terms of partial AUC maximization is presented in Algorithm~\ref{Alg:1}.
\begin{algorithm}[h]
	\caption{The proposed negative sampling algorithm}\label{Alg:1}
	\normalem
	\KwIn{Positive instance $\mathbf{x}^+$, candidate negative set $\mathcal{M}$, decision function $g(\cdot)$, $\alpha$, $\beta$, $\gamma$.}
	\KwOut{Negative instance}
	\For{candidate unlabeled sample $\mathbf{x}$ in candidate set $\mathcal{M}$} {
		~~Calculate empirical C.D.F by Eq~\eqref{eq:empiricalcdf}. \label{line2}\\
		Calculate prior probability by Eq~\eqref{eq:prior}.\label{line2.1}\\
		Calculate posterior estimation $\mathbb{P}(\textsc{Tn}|\mathbf{x})$ by Eq~\eqref{eq:posterior}. \label{line3}\\
		Calculate $\triangle_{\mathbf{x}^+},\triangle_{\mathbf{x}^-}$ by Eq~\eqref{eq:sum_plus}.\label{line4} \\
		Select negative instance $\mathbf{x}$ by Eq~\eqref{eq:negsampling}. \label{line5}\\}
	\KwResult{Negative instance $\mathbf{x}$.}
\end{algorithm}

\textbf{Complexity}: The computational complexity of Algorithm~\ref{Alg:1} mainly comes from line~\ref{line2}, which takes $\mathcal{O}(|\mathcal{I}_u^-|)$; Line~\ref{line3} takes $\mathcal{O}(1)$, line~\ref{line4} takes $\mathcal{O}(M)$, Line~\ref{line5} takes $\mathcal{O}(1)$. So the complexity of Algorithm~\ref{Alg:1} is $\mathcal{O}(M|\mathcal{M}||\mathcal{I}_u^-|)$, which is still linear with respect to number of un-interacted items $|\mathcal{I}_u^-|$ as random negative sampling.

\section{Experiments}
\subsection{Experiment Settings}
\subsubsection{Dataset}
We conduct experiments on three public datasets, including MovieLens-100k\footnote{~\url{https://grouplens.org/datasets/movielens}}, MovieLens-1M, and Yahoo!-R3.\footnote{~\url{http://webscope.sandbox.yahoo.com/catalog.php?datatype=r}.}  We  convert all the rating data~\cite{Steffen:2009:UAI} into implicit feedback and partition it into a training set, consisting of 80\% of the data, and a test set, consisting of the remaining 20\%. Table~\ref{Table:Dataset} of Appendix~\ref{appendix:data} summarizes the dataset statistics. We define the top 15\% of items ranked by interaction frequency as popular items.

\subsubsection{Baselines}
We compare proposed approach against two categories of methods. The first category is negative sampling algorithms, including  (a) fixed negative sampling distribution such as \textsf{RNS}~\cite{Steffen:2009:UAI,Xiangnan:2020:SIGIR,Weike:2013:IJCAI,Yu:2018:CIKM,Wang:2019:SIGIR,Xuejiao:2020:ASC,Liu:2017:SIGIR}, \textsf{PNS}~\cite{Mikolov:2013:NIPS,Chen:2017:KDD,Tang:2015:WWW}and FairStatic~\cite{Chen:2023:WWW}. (b) hard negative sampling with dynamic sampling distribution, including  \textsf{DNS}~\cite{Zhang:2013:SIGIR}.
\begin{itemize}
	\item[-]\textsf{RNS}:~\cite{Steffen:2009:UAI,Xiangnan:2020:SIGIR,Weike:2013:IJCAI,Yu:2018:CIKM,Wang:2019:SIGIR,Xuejiao:2020:ASC,Liu:2017:SIGIR}: (Random Negative Sampling) Uniformly sampling negatives.
	\item[-]\textsf{PNS}:~\cite{Mikolov:2013:NIPS,Chen:2017:KDD,Tang:2015:WWW}: (Popularity-based Negative Sampling) Adopting a fixed distribution proportional to an item interaction ratio, i.e., $\propto pop_j ^{0.75}$.
	\item[-]\textsf{DNS}:~\cite{Zhang:2013:SIGIR}: Over-sampling relative higher ranked hard negatives. Its sampling probability is a liner function to the relative ranking position.
	\item[-]\textsf{FairStatic}\cite{Chen:2023:WWW}: Sampling probabilities are calculated based on group popularity to achieve fair recommendations across different item categories.
\end{itemize}
The second category of baselines is popularity  debiasing  methods: including  (a)sample re-weighting method IPS-CN~\cite{Gruson:2019:WSDM}, (b) causal inference methods such as~\cite{Wei:2021:KDD} and CausE~\cite{Bonner:2018:RecSys} (c) loss function correcting methods such as BC\_LOSS~\cite{Zhang:2022:NIPS}, SAMREG~\cite{Boratto:2021:IP&MC}.
\begin{itemize}
	\item[-]\textsf{IPS-CN}:~\cite{Gruson:2019:WSDM}: Normalization or smoothing IPS penalty to attain a more stable output.
	\item[-]\textsf{MACR}:~\cite{Wei:2021:KDD}: Utilizing the counterfactual world in causal inference theory to decouple users, items, and recommendation results.
	\item[-]\textsf{CausE}:~\cite{Bonner:2018:RecSys}: Integrating recommendation behavior and user natural behavior into a fusion model based on causal inference.
	\item[-]\textsf{BC\_LOSS}:~\cite{Zhang:2022:NIPS}: Correcting contrastive loss by using bias-awareness angle. 
	\item[-]\textsf{SAMREG}:~\cite{Boratto:2021:IP&MC}: Correcting the loss function through a regularization term.
\end{itemize}
\subsubsection{Evaluation metrics}
We adopt top-k ranking metrics, including precision@k, recall@k, F1@k, and NDCG@k to evaluate whether top-k  recommended (exposed) items match users' preference. Given their widespread use, we omit their definitions here. Additionally, we use overestimation and underestimation rates associated with item popularity to evaluate the popularity bias. The definitions of these metrics are provided in Section\ref{sec:popmetric}. 

\subsubsection{Parameter Settings}
We employ the classic \textit{matrix factorization} (MF)~\cite{Koren:2009:Computer} and the recent \textit{light  graph convolution network} (LightGCN)~\cite{Xiangnan:2020:SIGIR} as encoder. The detailed parameter settings are presented in Table~\ref{table:Parameter settings} of Appendix~\ref{apendix:Parameter settings}. The codes is implemented with Pytorch~\footnote{Codes and data are released at:~\url{https://github.com/studentcej/AUC_NS}}. Computations were conducted on a personal computer with Windows 10 operating system, 3.99 GHz CPU,  RTX 3060Ti GPU, and 32 GB RAM.
\subsection{Experimental Results}
\subsubsection{Overall Performance}
Table~\ref{tab:performance} shows the overall ranking performance and popularity debiasing performance. The popularity debiasing task focuses on the over-recommended hot item rate (OHR). FairStatic~\cite{Chen:2023:WWW} and IPS-CN~\cite{Gruson:2019:WSDM} significantly reduce the OHR metric, but they also introduce too much bias of other kinds, which fails to maintain recommendation accuracy. When measuring the performance of debiasing methods, it is necessary to consider whether they can maintain accuracy. For example, uniformly recommending all items may lead to a decrease in the OHR, but it also results in the loss of recommending items that match users' preferences. 

Across two models and three datasets, our proposed method significantly reduces the over-recommended hot item rate (OHR) without compromising recommendation accuracy. These results suggest that reducing bias and improving recommendation accuracy can be a win-win scenario. Compared to negative sampling methods that only utilize prior information of item popularity (such as RNS, PNS) or sample information of ranking position (such as DNS), our proposed method combines item popularity and model prediction to formulate the posterior information for maximizing partial AUC, resulting in the best ranking performance. Furthermore, compared to popularity debiasing methods, our proposed method considers both model-dependent and model-independent bias, reduces the over-recommended popular items by appropriately penalizing them, and avoids introducing bias of under-recommended popular items.

\begin{table*}[!]
	\centering
	\caption{Overall performance of top-5 recommendation.}\label{tab:performance}
	\renewcommand{\arraystretch}{0.8}
	\resizebox{1\textwidth}{!}
	{
		\begin{threeparttable}
			\begin{tabular}{ccccccccccc}
				\toprule
				\multirow{3}{*}{Dataset} & \multirow{3}{*}{CF Model} & \multirow{3}{*}{Method} & \multicolumn{4}{c}{Ranking Perfomence} & \multicolumn{4}{c}{Popularity Debiasing Performence} \\
				\cmidrule(lr){4-7} \cmidrule(lr){8-11}          &       &       & Pre & Recall & F1 & NDCG & OHR & OCR & UCR & UHR \\
				\midrule
				\multirow{20}[4]{*}{MovieLens-100K} & \multirow{10}[2]{*}{MF} & RNS   &0.3676 & 0.1259 & 0.1614 & 0.3938 & 0.6324 & 0.0021 & 0.9289 & 0.8034 \\
				&       & FairStatic &0.3519 & 0.1341 & 0.1688 & 0.3817 & 0.5353 & 0.3732 & 0.9056 & 0.8117  \\
				&       & DNS   & 0.3843 & 0.131 & 0.1683 & 0.4086 & 0.6157 & 0.0021 & 0.9289 & 0.7939 \\
				&       & PNS   & \underline{0.3979} & 0.1338 & 0.1742 & \underline{0.4231} & 0.5939 & 0.1336 & 0.9225 & 0.7978 \\
				&       & IPS-CN &0.2891 & 0.0812 & 0.112 & 0.3036 & 0.5724 & 0.4948 & 0.8963 & 0.8966 \\
				&       & MACR  &0.3941 & 0.1354 & 0.1748 & 0.4178 & 0.5907 & 0.1371 & 0.9215 & 0.797\\
				&       & SAMREG &0.3919 & 0.1343 & 0.1736 & 0.4186 & 0.5874 & 0.19 & 0.9154 & 0.8042 \\
				&       & CausE &0.3947 & \underline{0.1357} & \underline{0.1753} & 0.4178 & 0.5943 & 0.1209 & 0.921 & 0.7953\\
				&       & BC\_LOSS &0.3849 & 0.1192 & 0.1584 & 0.3999 & 0.6021 & 0.1994 & 0.9133 & 0.8248  \\
				&       & \textbf{Proposed} &\textbf{0.4324} & \textbf{0.1462} &\textbf{0.1892}  & \textbf{0.4620} & 0.5671 & 0.0112 & 0.9270 & 0.7721   \\
				\cmidrule{2-11}          & \multirow{10}[2]{*}{LightGCN} & RNS   &0.3799 & 0.1302 & 0.1671 & 0.4028 & 0.6199 & 0.0053 & 0.9289 & 0.7966 \\
				&       & FairStatic & 0.3705 & 0.1328 & 0.1694 & 0.3986 & 0.5699 & 0.3213 & 0.9118 & 0.8103 \\
				&       & DNS   & 0.3917 & 0.1337 & 0.172 & 0.4166 & 0.6082 & 0.0064 & 0.9279 & 0.7903 \\
				&       & PNS   & 0.4098 & 0.1395 & 0.1802 & 0.4335 & 0.5813 & 0.1175 & 0.9218 & 0.7897  \\
				&       & IPS-CN &0.3379 & 0.1164 & 0.1499 & 0.3602 & 0.645 & 0.196 & 0.9185 & 0.8284 \\
				&       & MACR  &   0.4081 & 0.1408 & 0.1817 & 0.435 & 0.5747 & 0.1881 & 0.9151 & 0.7921\\
				&       & SAMREG & 0.4078 & 0.1383 & 0.1796 & 0.4357 & 0.5673 & 0.2208 & 0.9127 & 0.7992  \\
				&       & CausE & \underline{0.4142} & \underline{0.1441} & \underline{0.185}4 & \underline{0.4367} & 0.5671 & 0.1847 & 0.9148 & 0.7874  \\
				&       & BC\_LOSS & 0.3968 & 0.1318 & 0.1714 & 0.418 & 0.5999 & 0.058 & 0.9257 & 0.7926 \\
				&       & \textbf{Proposed} &\textbf{0.4380}  & \textbf{0.1484} &\textbf{0.1918}  & \textbf{0.4669} & 0.5614 & 0.0134 & 0.9265 & 0.7674 \\
				\midrule
				\multirow{20}[4]{*}{Yahoo!-R3} & \multirow{10}[2]{*}{MF} & RNS   &0.1334 & 0.0989 & 0.1088 & 0.1495 & 0.861 & 0.0771 & 0.9829 & 0.7232 \\
				&       & FairStatic &0.1348 & 0.1015 & 0.1111 & 0.1532 & 0.8565 & 0.2156 & 0.9811 & 0.7186 \\
				&       & DNS   &  \underline{0.139} & \underline{0.1037} & \underline{0.1139} & \underline{0.1558} & 0.8572 & 0.0733 & 0.9828 & 0.7109  \\
				&       & PNS   & 0.1237 & 0.0916 & 0.1009 & 0.1394 & 0.8626 & 0.2415 & 0.9806 & 0.7438  \\
				&       & IPS-CN & 0.0321 & 0.0208 & 0.0239 & 0.033 & 0.3918 & 0.9755 & 0.9665 & 0.9195\\
				&       & MACR  &0.1335 & 0.0991 & 0.1089 & 0.1509 & 0.8653 & 0.0506 & 0.9838 & 0.7199  \\
				&       & SAMREG & 0.1257 & 0.0919 & 0.1017 & 0.1375 & 0.8655 & 0.1666 & 0.9807 & 0.7419  \\
				&       & CausE & 0.1224 & 0.0904 & 0.0996 & 0.1355 & 0.8727 & 0.0819 & 0.9846 & 0.7389 \\
				&       & BC\_LOSS & 0.1209 & 0.0863 & 0.0964 & 0.1288 & 0.8644 & 0.222 & 0.9777 & 0.7602  \\
				&       & \textbf{Proposed} & \textbf{0.1397}  & \textbf{0.1045} & \textbf{0.114}6 & \textbf{0.1583} & 0.8600 & 0.0317 & 0.9848 & 0.7052  \\
				\cmidrule{2-11}          & \multirow{10}[2]{*}{LightGCN} & RNS   &0.1436 & 0.1059 & 0.1168 & 0.1605 & 0.8394 & 0.1438 & 0.9792 & 0.7157   \\
				&       & FairStatic &0.1384 & 0.1037 & 0.1138 & 0.1572 & 0.8404 & 0.3723 & 0.9763 & 0.7229\\
				&       & DNS   & \underline{0.1481} & \underline{0.1096} & \underline{0.1207} & \underline{0.1666} & 0.8395 & 0.1337 & 0.979 & 0.7067 \\
				&       & PNS   &0.1282 & 0.0946 & 0.1044 & 0.145 & 0.8451 & 0.2994 & 0.9762 & 0.7444 \\
				&       & IPS-CN &0.1111 & 0.0819 & 0.0904 & 0.1262 & 0.8747 & 0.1223 & 0.9823 & 0.7602  \\
				&       & MACR  & 0.1023 & 0.0758 & 0.0835 & 0.1149 & 0.7876 & 0.1961 & 0.9848 & 0.7746  \\
				&       & SAMREG &0.1406 & 0.103 & 0.1139 & 0.1561 & 0.8414 & 0.1854 & 0.9779 & 0.7223   \\
				&       & CausE &0.1435 & 0.1067 & 0.1173 & 0.1624 & 0.8508 & 0.086 & 0.9815 & 0.7067 \\
				&       & BC\_LOSS &0.1336 & 0.0998 & 0.1094 & 0.1499 & 0.8629 & 0.0421 & 0.9846 & 0.7179  \\
				&       & \textbf{Proposed} &\textbf{0.1537} & \textbf{0.1144} & \textbf{0.1257} & \textbf{0.1763} & 0.8427 & 0.0940 & 0.9820 & 0.6906  \\
				\midrule
				\multicolumn{1}{c}{\multirow{20}[4]{*}{MovieLens-1M}} & \multirow{10}[2]{*}{MF} & RNS   & 0.3675 & 0.0774 & 0.1117 & 0.3884 & 0.6314 & 0.0118 & 0.9497 & 0.8785  \\
				&       & FairStatic &0.3543 & 0.09 & 0.1256 & 0.3781 & 0.5436 & 0.3945 & 0.9274 & 0.8786 \\
				&       & DNS   &0.3873 & 0.0848 & 0.1214 & 0.4081 & 0.6113 & 0.0144 & 0.9493 & 0.867 \\
				&       & PNS   &\underline{0.3931} & \underline{0.095} & \underline{0.1334} & \underline{0.4129} & 0.5949 & 0.1708 & 0.9398 & 0.8635  \\
				&       & IPS-CN &0.2437 & 0.0516 & 0.0738 & 0.2456 & 0.6334 & 0.3839 & 0.936 & 0.9358  \\
				&       & MACR  & 0.3884 & 0.0843 & 0.1211 & 0.4079 & 0.6114 & 0.0194 & 0.9492 & 0.8677   \\
				&       & SAMREG &0.3826 & 0.085 & 0.1214 & 0.4019 & 0.6154 & 0.0275 & 0.9484 & 0.8686  \\
				&       & CausE &0.3774 & 0.0812 & 0.1167 & 0.3986 & 0.6216 & 0.0131 & 0.9497 & 0.8731  \\
				&       & BC\_LOSS &0.349 & 0.0749 & 0.108 & 0.3642 & 0.6502 & 0.0305 & 0.9489 & 0.8817    \\
				&       & \textbf{Proposed} & \textbf{0.4218} & \textbf{0.0989} & \textbf{0.1398} & \textbf{0.4429} & 0.5786 & 0.0214 & 0.9484 & 0.8460  \\
				\cmidrule{2-11}          & \multirow{10}[2]{*}{LightGCN} & RNS   &0.3812 & 0.0864 & 0.1227 & 0.3985 & 0.6065 & 0.1497 & 0.9406 & 0.8737  \\
				&       & FairStatic & 0.2652 & 0.0656 & 0.0913 & 0.2803 & 0.3991 & 0.7177 & 0.8995 & 0.9365  \\
				&       & DNS   &0.3875 & 0.0883 & 0.1252 & 0.4064 & 0.6022 & 0.1488 & 0.94 & 0.8713 \\
				&       & PNS   & 0.3308 & 0.0757 & 0.1069 & 0.3477 & 0.6025 & 0.3729 & 0.9297 & 0.9019 \\
				&       & IPS-CN &  0.294 & 0.0631 & 0.0908 & 0.3128 & 0.6492 & 0.2794 & 0.9402 & 0.9124  \\
				&       & MACR  & 0.3871 & 0.0861 & 0.1229 & 0.4075 & 0.6117 & 0.0358 & 0.9484 & 0.8668 \\
				&       & SAMREG &0.3803 & 0.0871 & 0.1234 & 0.397 & 0.5981 & 0.2478 & 0.9321 & 0.8797 \\
				&       & CausE & \underline{0.3935} & \underline{0.09}2 & \underline{0.1297} & \underline{0.4123} & 0.5925 & 0.1825 & 0.9347 & 0.8688\\
				&       & BC\_LOSS &0.2126 & 0.041 & 0.0593 & 0.2222 & 0.7865 & 0.001 & 0.951 & 0.937 \\
				&       & \textbf{Proposed} &\textbf{0.4190} &\textbf{0.0991}  & \textbf{0.1395} & \textbf{0.4400} & 0.5772 & 0.1118 & 0.9404 & 0.8522   \\
				\bottomrule
			\end{tabular}
			\begin{tablenotes}    
				\footnotesize               
				\item[1] Improvements of precision, recall, F1, ndcg are significant (p $\leq$ 0.01) , as validated by student's t-test(expect yahoo-MF).
			\end{tablenotes}           
		\end{threeparttable}
	}
\end{table*}%

\subsubsection{Hyper-parameter Analysis}
\begin{table}[!]
	\centering
	\caption{Impact of $\alpha$.}
	\resizebox{0.8\textwidth}{!}{
		\begin{tabular}{lccccccccccc}
			\toprule
			$\alpha$     & 0.5   & 0.55  & 0.6   & 0.65  & 0.7   & 0.75  & 0.8   & 0.85  & 0.9   & 0.95  & 1 \\
			\midrule
			Precision   & 0.4263 & 0.4267 & 0.4308 & 0.4267 & 0.4297 & \textbf{0.4324} & 0.4295 & 0.4305 & 0.4297 & 0.4191 & 0.3941 \\
			OHR & 0.5723 & 0.5717 & 0.5678 & 0.5725 & 0.5691 & \textbf{0.5671} & 0.5697 & 0.5695 & 0.5692 & 0.5793 & 0.6103 \\
			\bottomrule
		\end{tabular}%
		\label{tab:alpha}%
	}
\end{table}%
The impact of $\alpha$ is presented in Table~\ref{tab:alpha}. $\alpha$ corresponds to macro-AUC of encoder that accounts for model-dependent bias correction. As $\alpha$ increases, the top-k ranking performance exhibit an inverted U-shaped curve, with an initial increase followed by a decline. This trend is due to the fact that $\alpha$ corresponds to the macro AUC empirical estimate of the encoder, and inappropriate $\alpha$ resulting an biased posterior estimation $\mathbb{P}(\textsc{Tn}|\mathbf{x})$, which decrease ranking performance. In terms of the popularity debiasing metric, the over-recommended hot item rate(OHR) remains relatively stable when $\alpha$ values are below the optimal level because it isolates the bias from the model. However, when $\alpha$ values are above the optimal level, the sample information is assigned an excessively high confidence. Given that the collaborative filtering model originally tends to overestimate user preference for popular items, the too large value of $\alpha$ further reduces the probability of sampling these true negative examples, since $\mathbb{P}(\textsc{Tn}|\mathbf{x})$ is a decreasing function of $\alpha$. At this point, the too large value of $\alpha$ not only fails to correct model bias but also further amplifies it, leading to a rapid increase in the overestimation rate of popular items. In summary, for models that are more prone to popularity bias, a relatively lower value of $\alpha$ should be selected.

\begin{table}[!]
	\centering
	\caption{The impact of $\beta$.}
	\resizebox{0.8\textwidth}{!}{
		\begin{tabular}{lcccccccc}
			\toprule
			$\beta$     & 0.1   & 0.05  & 0.025 & 0.01  & 0.0075 & 0.005 & 0.001 & 0.0005 \\
			\midrule
			Precision   & 0.3983 & 0.4197 & 0.4299 & \textbf{0.4324} & 0.4299 & 0.4282 & 0.4261 & 0.4233 \\
			OHR & 0.5871 & 0.5734 & 0.5679 & \textbf{0.5671} & 0.5686 & 0.5718 & 0.5721 & 0.5754 \\
			Sampled popular item rate    & 0.3604 & 0.3241 & 0.2926 & 0.2709 & 0.2674 & 0.2636 & 0.2584 & 0.2577 \\
			Sampled false negative rate   & 0.0701 & 0.0637& 0.0581 & 0.0539 & 0.0533 & 0.0526 & 0.0519 & 0.0517 \\
			\bottomrule
		\end{tabular}%
		\label{tab:beta}%
	}
\end{table}%
The impact of $\beta$ is presented in Table~\ref{tab:beta}. $\beta$ is a parameter used to correct model-independent bias (bias from dataset). As $\beta$ increases, negative sampling is dominated by a higher prior probability to increase the sampling of popular items and penalize them, thereby correcting the popularity bias from the dataset. With an increase in $\beta$, the top-k ranking metrics initially increases and then decreases, while the popularity debiasing metric (OHR) exhibits the opposite trend. This is because the sampled popular item rate  increases with $\beta$, which increases the penalty on popular items and corrects the bias from the dataset. However, when excessive popular items are sampled, the sampled false negative rate also increases, which means that popular items are excessively penalized, leading to a decrease in recommendation performance. Additionally, an excessively large value of $\beta$ can cause the prior probability to be overly concentrated on the most popular items, resulting in a loss of correction ability for dataset bias. Therefore, for datasets with severe popularity bias, a proper value of $\beta$ should be selected.

\begin{table}[!]
	\centering
	\caption{The impact of $\gamma$}
	\resizebox{0.8\textwidth}{!}{
		\begin{tabular}{cccccccccc}
			\toprule
			$\gamma$    & 2e-03  & 4e-03 & 6e-03& 8e-03 & 10e-03 & 100e-03 & 200e-03 & 300e-03 & 400e-03 \\
			\midrule
			Pre@5 & 0.4293 & 0.4276 & \textbf{0.4324} & 0.4305 & 0.4305 & 0.4246 & 0.4231 & 0.4129 & 0.407 \\
			Pre@10  & 0.3581  & 0.3599  & 0.3584 & 0.3594  & 0.3589 & \textbf{0.3601} & 0.3543 & 0.3504 & 0.3454 \\
			\bottomrule
		\end{tabular}%
		\label{tab:gamma}%
	}
\end{table}%
The impact of $\gamma$ is presented in Table~\ref{tab:gamma}. $\gamma\in[0,1]$ controls the range of false positive rates that we aim to achieve a high true positive rate within, to maintain the top-k ranking performance. The best top-5 ranking performance is achieved at $\gamma=6e-3$, while the best top-10 ranking performance is achieved at $\gamma=100e-3$, since higher false positive rates are tolerable as k increases. Moreover, as Eq.\eqref{eq:sum_minus} is a biased estimate due to the fact that negative examples are unlabeled (false negatives main contained), a appropriately selected $\gamma$ can compensate for the impact of false negative examples when estimating Eq.\eqref{eq:sum_minus}.

\section{Related Work}
We summarize related work from the following two aspects, including popularity debiasing methods and negative sampling methods.

The approach for reducing popularity bias can be categorized into three strands for. The first type is \textit{post-ranking methods}~\cite{Steck:2018:RecSys,Abdollahpouri:2019:arxiv,Steck:2019:arxiv,Zhu:2021:KDD,Zhu:2021:WSDM,Liu:2017:SIGIR}, which modify the rating matrix or recommendation list manually after model training. For example, FPC~\cite{Zhu:2021:KDD} rescales predicted scores to eliminate popularity bias, while Liu et al.~\cite{Liu:2017:SIGIR} increase recommendation diversity by randomly shuffling the list through random walking. The second type is \textit{empirical risk rewriting methods}\cite{Schnabel:2016:PMLR,Bottou:2013:JMLR,Gruson:2019:WSDM,Joachims:2017:IJCAI,Yang:2018:RecSys,Saito:2020:WSDM}, which re-weight samples targeting on popularity bias from dataset, with the most classic algorithm being IPS\cite{Schnabel:2016:PMLR}. However, the estimation of inverse propensity score is subject to limitations, and many studies have proposed more sophisticated methods for estimating them~\cite{Bottou:2013:JMLR,Gruson:2019:WSDM}. The third type is\textit{ causal inference methods}\cite{Bonner:2018:RecSys,Zhang:2021:SIGIR,Wang:2021:SIGDKK,Wei:2021:SIGDKK,Zheng:2021:WWW,Liu:2020:SIGIR,Zhan:2022:SIGKDD,Xu:2021:arXiv}, which utilize causal inference theory to decouple the undesired causal relationship in the recommendation model.
By using the Front-Door\cite{Xu:2021:arXiv}/Back-Door~\cite{Zhan:2022:SIGKDD}/intervention criteria~\cite{Zhang:2021:SIGIR} in causal inference to decouple the effect of item popularity and exposure. Similarly, MACR~\cite{Wei:2021:KDD} has decoupled the direct connection between item and user popularity and recommendation results to eliminate popularity bias.

From a statistical perspective, negative sampling methods can be classified into \textit{static sampling distribution} and \textit{dynamic sampling distribution} based on whether the sampling distribution for negative samples is dependent on model predictions. For the first category of static sampling distribution, the basic idea is to relate static prior information to the sampling distribution, and adopt a fixed distribution for negative sampling during the training process. The most widely used method is random negative sampling (RNS)\cite{Steffen:2009:UAI,Xuejiao:2020:ASC,Yu:2018:CIKM,Xiangnan:2020:SIGIR}, which uniformly samples negatives from un-interacted instances. Some studies have proposed setting the sampling probability of a negative instance according to its popularity (interaction frequency), known as popularity-biased negative sampling (PNS)\cite{Mikolov:2013:NIPS,Chen:2017:KDD, Tang:2015:WWW,Mihajlo:2015:SIGIR}. The second category is \textit{dynamic negative sampling}, which adopts an adaptive sampling distribution targeting hard negative instances. The basic idea is that negative instances similar to positive instances in the embedding space\cite{Zhang:2019:ICDE,Sun:2018:IJCAI} benefit model training. Many hard negative sampling strategies have been proposed for personalized recommendation\cite{Steffen:2014:WSDM,Zhang:2013:SIGIR,Ding:2020:NIPS,Park:2019:WWW,Huang:2021:KDD,Ding:2019:IJCAI,Ding:2020:NIPS}. For example, \cite{Zhang:2013:SIGIR} propose to oversample higher scored items, and \cite{Steffen:2014:WSDM} proposed to oversample higher ranked negatives, since they are similar to positive instances. \cite{Wentao:2023:WWW} provides a theoretical analysis of the relationship between hard negative sampling and one-way partial AUC optimization.

\section{Conclusion}
In this paper, we utilize negative sampling to intervene in model training, maximizing partial AUC to maintain recommendation accuracy while penalizing popular items to reduce popularity bias. Specifically, it corrects sample information to alleviate model-dependent bias and corrects prior information to alleviate model-independent bias. This provides a flexible and unified method to reduce popularity bias without sacrificing recommendation accuracy. However, proposed method only maintains the true positive rate for hot items, yet a common problem faced is that collaborative filtering models tend to recommend fewer cold items. Although we can increase the recommendation of cold items by excessively penalizing popular items, additional bias of over-recommended cold items may introduced, which is left to be addressed in future research by enhancing recommendation models.

\normalem
\bibliography{main}

\begin{thebibliography}{10}

\bibitem{Abdollahpouri:2019:AAAI}
Himan Abdollahpouri.
\newblock Popularity bias in ranking and recommendation.
\newblock In {\em Proceedings of the 2019 AAAI/ACM Conference on AI, Ethics,
  and Society}, pages 529--530, 2019.

\bibitem{Abdollahpouri:2019:arxiv}
Himan Abdollahpouri, Masoud Mansoury, Robin Burke, and Bamshad Mobasher.
\newblock The unfairness of popularity bias in recommendation.
\newblock {\em arXiv preprint arXiv:1907.13286}, 2019.

\bibitem{Bonner:2018:RecSys}
Stephen Bonner and Flavian Vasile.
\newblock Causal embeddings for recommendation.
\newblock In {\em Proceedings of the 12th ACM conference on recommender
  systems}, pages 104--112, 2018.

\bibitem{Boratto:2021:IP&MC}
Ludovico Boratto, Gianni Fenu, and Mirko Marras.
\newblock Connecting user and item perspectives in popularity debiasing for
  collaborative recommendation.
\newblock {\em Information Processing \& Management}, 58(1):102387, 2021.

\bibitem{Bottou:2013:JMLR}
L{\'e}on Bottou, Jonas Peters, Joaquin Qui{\~n}onero-Candela, Denis~X Charles,
  D~Max Chickering, Elon Portugaly, Dipankar Ray, Patrice Simard, and
  Ed~Snelson.
\newblock Counterfactual reasoning and learning systems: The example of
  computational advertising.
\newblock {\em Journal of Machine Learning Research}, 14(11), 2013.

\bibitem{chen2021autodebias}
Jiawei Chen, Hande Dong, Yang Qiu, Xiangnan He, Xin Xin, Liang Chen, Guli Lin,
  and Keping Yang.
\newblock Autodebias: Learning to debias for recommendation.
\newblock In {\em Proceedings of the 44th International ACM SIGIR Conference on
  Research and Development in Information Retrieval}, pages 21--30, 2021.

\bibitem{Chen2023:TOIS}
Jiawei Chen, Hande Dong, Xiang Wang, Fuli Feng, Meng Wang, and Xiangnan He.
\newblock Bias and debias in recommender system: A survey and future
  directions.
\newblock {\em ACM Transactions on Information Systems}, 41(3):1--39, 2023.

\bibitem{Chen:2017:KDD}
T.~Chen, Y.~Sun, S.~Yue, and L.~Hong.
\newblock On sampling strategies for neural network-based collaborative
  filtering.
\newblock In {\em KDD}, page~14, 2017.

\bibitem{Chen:2023:WWW}
Xiao Chen, Wenqi Fan, Jingfan Chen, Haochen Liu, Zitao Liu, Zhaoxiang Zhang,
  and Qing Li.
\newblock Fairly adaptive negative sampling for recommendations.
\newblock In {\em Proceedings of the ACM Web Conference 2023}, pages
  3723--3733, 2023.

\bibitem{David:2004}
H.~A. David and H.~N. Nagaraja.
\newblock {\em Order statistics}.
\newblock 2004.

\bibitem{Deldjoo2023:UMUAI}
Yashar Deldjoo, Dietmar Jannach, Alejandro Bellogin, Alessandro Difonzo, and
  Dario Zanzonelli.
\newblock Fairness in recommender systems: research landscape and future
  directions.
\newblock {\em User Modeling and User-Adapted Interaction}, pages 1--50, 2023.

\bibitem{Ding:2019:IJCAI}
Jingtao Ding, Yuhan Quan, Xiangnan He, Yong Li, and Depeng Jin.
\newblock Reinforced negative sampling for recommendation with exposure data.
\newblock In {\em IJCAI}, pages 2230--2236, 2019.

\bibitem{Ding:2020:NIPS}
Jingtao Ding, Yuhan Quan, Quanming Yao, Yong Li, and Depeng Jin.
\newblock Simplify and robustify negative sampling for implicit collaborative
  filtering.
\newblock In {\em NeurIPS}, 2020.

\bibitem{dodd2003partial}
Lori~E Dodd and Margaret~S Pepe.
\newblock Partial auc estimation and regression.
\newblock {\em Biometrics}, 59(3):614--623, 2003.

\bibitem{Mihajlo:2015:SIGIR}
Mihajlo Grbovic, Nemanja Djuric, Vladan Radosavljevic, Fabrizio Silvestri, and
  Narayan Bhamidipati.
\newblock Context- and content-aware embeddings for query rewriting in
  sponsored search.
\newblock In {\em Proceedings of the 38th International {ACM} {SIGIR}
  Conference on Research and Development in Information Retrieval, Santiago,
  Chile,August 9-13, 2015}, pages 383--392, 2015.

\bibitem{Gruson:2019:WSDM}
Alois Gruson, Praveen Chandar, Christophe Charbuillet, James McInerney,
  Samantha Hansen, Damien Tardieu, and Ben Carterette.
\newblock Offline evaluation to make decisions about playlistrecommendation
  algorithms.
\newblock In {\em Proceedings of the Twelfth ACM International Conference on
  Web Search and Data Mining}, pages 420--428, 2019.

\bibitem{Guo2019:RecSys}
Huifeng Guo, Jinkai Yu, Qing Liu, Ruiming Tang, and Yuzhou Zhang.
\newblock Pal: a position-bias aware learning framework for ctr prediction in
  live recommender systems.
\newblock In {\em Proceedings of the 13th ACM Conference on Recommender
  Systems}, pages 452--456, 2019.

\bibitem{Gupta2021:PMLR}
Shantanu Gupta, Hao Wang, Zachary Lipton, and Yuyang Wang.
\newblock Correcting exposure bias for link recommendation.
\newblock In {\em International Conference on Machine Learning}, pages
  3953--3963, 2021.

\bibitem{Huang:2021:KDD}
Tinglin Huang, Yuxiao Dong, Ming Ding, Zhen Yang, Wenzheng Feng, Xinyu Wang,
  and Jie Tang.
\newblock Mixgcf: An improved training method for graph neural network-based
  recommender systems.
\newblock In {\em Proceedings of the 27th ACM SIGKDD Conference on Knowledge
  Discovery \& Data Mining}, page 665–674, 2021.

\bibitem{iwata2020semi}
Tomoharu Iwata, Akinori Fujino, and Naonori Ueda.
\newblock Semi-supervised learning for maximizing the partial auc.
\newblock In {\em Proceedings of the AAAI Conference on Artificial
  Intelligence}, volume~34, pages 4239--4246, 2020.

\bibitem{Joachims:2017:IJCAI}
Thorsten Joachims, Adith Swaminathan, and Tobias Schnabel.
\newblock Unbiased learning-to-rank with biased feedback.
\newblock In {\em Proceedings of the tenth ACM international conference on web
  search and data mining}, pages 781--789, 2017.

\bibitem{Koren:2009:Computer}
Yehuda Koren, Robert Bell, and Chris Volinsky.
\newblock Matrix factorization techniques for recommender systems.
\newblock {\em Computer}, 42(8):30--37, 2009.

\bibitem{Liu:2020:SIGIR}
Dugang Liu, Pengxiang Cheng, Zhenhua Dong, Xiuqiang He, Weike Pan, and Zhong
  Ming.
\newblock A general knowledge distillation framework for counterfactual
  recommendation via uniform data.
\newblock In {\em Proceedings of the 43rd International ACM SIGIR Conference on
  Research and Development in Information Retrieval}, pages 831--840, 2020.

\bibitem{Liu:2017:SIGIR}
Shenghao Liu, Bang Wang, and Minghua Xu.
\newblock Event recommendation based on graph random walking and history
  preference reranking.
\newblock In {\em Proceedings of the 40th International ACM SIGIR Conference on
  Research and Development in Information Retrieval}, pages 861--864, 2017.

\bibitem{Mikolov:2013:NIPS}
Tomas Mikolov, Ilya Sutskever, Kai Chen, Greg Corrado, and Jeffrey Dean.
\newblock Distributed representations of words and phrases and their
  compositionality.
\newblock In {\em Proceedings of the 26th International Conference on Neural
  Information Processing Systems}, pages 3111--3119, 2013.

\bibitem{narasimhan2013structural}
Harikrishna Narasimhan and Shivani Agarwal.
\newblock A structural svm based approach for optimizing partial auc.
\newblock In {\em International Conference on Machine Learning}, pages
  516--524. PMLR, 2013.

\bibitem{Ovaisi2020:WWW}
Zohreh Ovaisi, Ragib Ahsan, Yifan Zhang, Kathryn Vasilaky, and Elena Zheleva.
\newblock Correcting for selection bias in learning-to-rank systems.
\newblock In {\em Proceedings of The Web Conference 2020}, pages 1863--1873,
  2020.

\bibitem{Weike:2013:IJCAI}
Weike Pan and Li~Chen.
\newblock Gbpr: Group preference based bayesian personalized ranking for
  one-class collaborative filterin.
\newblock In {\em Twenty-Third International Joint Conference on Artificial
  Intelligence}, 2013.

\bibitem{Park:2019:WWW}
Dae~Hoon Park and Yi~Chang.
\newblock Adversarial sampling and training for semi-supervised information
  retrieval.
\newblock In {\em WWW}, page 1443–1453, 2019.

\bibitem{Steffen:2014:WSDM}
Steffen Rendle and Christoph Freudenthaler.
\newblock Improving pairwise learning for item recommendation from implicit
  feedback.
\newblock In {\em Proceedings of the 7th ACM international conference on Web
  Search and Data Mining}, pages 273--282, 2014.

\bibitem{Steffen:2009:UAI}
Steffen Rendle, Christoph Freudenthaler, Zeno Gantner, and Lars
  Schmidt{-}Thieme.
\newblock Bpr: Bayesian personalized ranking from implicit feedback.
\newblock In {\em UAI 2009, Proceedings of the Twenty-Fifth Conference on
  Uncertainty in Artificial Intelligence, Montreal, QC, Canada, June 18-21,
  2009}, pages 452--461, 2009.

\bibitem{Saito:2020:WSDM}
Yuta Saito, Suguru Yaginuma, Yuta Nishino, Hayato Sakata, and Kazuhide Nakata.
\newblock Unbiased recommender learning from missing-not-at-random implicit
  feedbac.
\newblock In {\em Proceedings of the 13th International Conference on Web
  Search and Data Mining}, pages 501--509, 2020.

\bibitem{Schnabel2016:PMLR}
Tobias Schnabel, Adith Swaminathan, Ashudeep Singh, Navin Chandak, and Thorsten
  Joachims.
\newblock Recommendations as treatments: Debiasing learning and evaluation.
\newblock In {\em international conference on machine learning}, pages
  1670--1679, 2016.

\bibitem{Schnabel:2016:PMLR}
Tobias Schnabel, Adith Swaminathan, Ashudeep Singh, Navin Chandak, and Thorsten
  Joachims.
\newblock Recommendations as treatments: Debiasing learning and evaluation.
\newblock In {\em Proceedings of The 33rd International Conference on Machine
  Learning}, pages 1670--1679, 2016.

\bibitem{Wentao:2023:WWW}
Wentao Shi, Jiawei Chen, Fuli Feng, Jizhi Zhang, Junkang Wu, Chongming Gao, and
  Xiangnan He.
\newblock On the theories behind hard negative sampling for recommendation.
\newblock In {\em WWW}, 2023.

\bibitem{Steck:2018:RecSys}
Harald Steck.
\newblock Calibrated recommendations.
\newblock In {\em Proceedings of the 12th ACM conference on recommender
  systems}, pages 154--162, 2018.

\bibitem{Steck:2019:arxiv}
Harald Steck.
\newblock Collaborative filtering via high-dimensional regression.
\newblock {\em arXiv preprint arXiv:1904.1303}, 2019.

\bibitem{Sun:2018:IJCAI}
Zequn Sun, Wei Hu, Qingheng Zhang, and Yuzhong Qu.
\newblock Bootstrapping entity alignment with knowledge graph embedding.
\newblock In {\em IJCAI}, pages 4396--4402, 2018.

\bibitem{Tang:2015:WWW}
Jian Tang, Meng Qu, Mingzhe Wang, Ming Zhang, Jun Yan, and Qiaozhu Mei.
\newblock Line: Large-scale information network embedding.
\newblock In {\em Proceedings of the 24th International Conference on World
  Wide Web}, pages 1067--1077, 2015.

\bibitem{Wang:2021:SIGDKK}
Wenjie Wang, Fuli Feng, Xiangnan He, Xiang Wang, and Tat-Seng Chua.
\newblock Deconfounded recommendation for alleviating bias amplification.
\newblock In {\em Proceedings of the 27th ACM SIGKDD Conference on Knowledge
  Discovery \& Data Mining}, page 2021, 1717--1725.

\bibitem{Wang:2019:SIGIR}
Xiang Wang, Xiangnan He, Meng Wang, Fuli Feng, and Tat~Seng Chua.
\newblock Neural graph collaborative filtering.
\newblock In {\em SIGIR '2019}, pages 2344--2353, Proceedings of the 42nd
  International ACM SIGIR Conference.

\bibitem{Wei:2021:SIGDKK}
Tianxin Wei, Fuli Feng, Jiawei Chen, Ziwei Wu, Jinfeng Yi, and Xiangnan He.
\newblock Model-agnostic counterfactual reasoning for eliminating popularity
  bias in recommender system.
\newblock In {\em Proceedings of the 27th ACM SIGKDD Conference on Knowledge
  Discovery \& Data Mining}, page 2021, 1791--1800.

\bibitem{Wei:2021:KDD}
Tianxin Wei, Fuli Feng, Jiawei Chen, Ziwei Wu, Jinfeng Yi, and Xiangnan He.
\newblock Model-agnostic counterfactual reasoning for eliminating popularity
  bias in recommender system.
\newblock In {\em Proceedings of the 27th ACM SIGKDD Conference on Knowledge
  Discovery \& Data Mining}, pages 1791--1800, 2021.

\bibitem{Xiangnan:2020:SIGIR}
He~Xiangnan, Deng Kuan, Wang Xiang, Li~Yan, Zhang Yongdong, and Wang Meng.
\newblock Lightgcn: Simplifying and powering graph convolution network for
  recommendation.
\newblock In {\em SIGIR}, page~10, 2020.

\bibitem{Xu:2021:arXiv}
Shuyuan Xu, Yingqiang Ge, Yunqi Li, Zuohui Fu, Xu~Chen, and Yongfeng Zhang.
\newblock Causal collaborative filtering.
\newblock {\em arXiv preprint arXiv:2102.01868}, 2021.

\bibitem{Yang:2018:RecSys}
Longqi Yang, Yin Cui, Yuan Xuan, Chenyang Wang, Serge Belongie, and Deborah
  Estrin.
\newblock Unbiased offline recommender evaluation for missing-not-at-random
  implicit feedback.
\newblock In {\em Proceedings of the 12th ACM conference on recommender
  systems}, pages 279--287, 2018.

\bibitem{yang2022auc}
Tianbao Yang and Yiming Ying.
\newblock Auc maximization in the era of big data and ai: A survey.
\newblock {\em ACM Computing Surveys}, 55(8):1--37, 2022.

\bibitem{Xuejiao:2020:ASC}
Xuejiao. Yang and Bang. Wang.
\newblock Local ranking and global fusion for personalized recommendation.
\newblock {\em Applied Soft Computing}, 96(1):106636, 2020.

\bibitem{yang2021all}
Zhiyong Yang, Qianqian Xu, Shilong Bao, Yuan He, Xiaochun Cao, and Qingming
  Huang.
\newblock When all we need is a piece of the pie: A generic framework for
  optimizing two-way partial auc.
\newblock In {\em International Conference on Machine Learning}, pages
  11820--11829. PMLR, 2021.

\bibitem{yang2022optimizing}
Zhiyong Yang, Qianqian Xu, Shilong Bao, Yuan He, Xiaochun Cao, and Qingming
  Huang.
\newblock Optimizing two-way partial auc with an end-to-end framework.
\newblock {\em IEEE Transactions on Pattern Analysis and Machine Intelligence},
  2022.

\bibitem{Yu:2018:CIKM}
Runlong Yu, Yunzhou Zhang, Yuyang Ye, Le~Wu, Chao Wang, Qi~Liu, and Enhong
  Chen.
\newblock Multiple pairwise ranking with implicit feedback.
\newblock In {\em Proceedings of the 27th ACM Int. Conference on Information
  and Knowledge Management, ACM}, pages 1727--1730, 2018.

\bibitem{Zhan:2022:SIGKDD}
Ruohan Zhan, Changhua Pei, Qiang Su, Jianfeng Wen, Xueliang Wang, Guanyu Mu,
  Dong Zheng, Peng Jiang, and Kun Gai.
\newblock Deconfounding duration bias in watch-time prediction for video
  recommendation.
\newblock In {\em Proceedings of the 28th ACM SIGKDD Conference on Knowledge
  Discovery and Data Mining}, pages 4472--4481, 2022.

\bibitem{Zhang:2022:NIPS}
An~Zhang, Wenchang Ma, Xiang Wang, and Tat-Seng Chua.
\newblock Incorporating bias-aware margins into contrastive loss for
  collaborative filtering.
\newblock In {\em Advances in Neural Information Processing Systems}, pages
  7866--7878, 2022.

\bibitem{Zhang:2013:SIGIR}
Weinan Zhang, Tianqi Chen, Jun Wang, and Yong Yu.
\newblock Optimizing top-n collaborative filtering via dynamic negative item
  sampling.
\newblock In {\em Proceedings of the 36th International ACM SIGIR Conference on
  Research and Development in Information Retrieval}, pages 785--788, 2013.

\bibitem{Zhang:2021:SIGIR}
Yang Zhang, Fuli Feng, Xiangnan He, Tianxin Wei, Chonggang Song, Guohui Ling,
  and Yongdong Zhang.
\newblock Causal intervention for leveraging popularity bias in recommendation.
\newblock In {\em Proceedings of the 44th International ACM SIGIR Conference on
  Research and Development in Information Retrieva}, pages 11--20, 2021.

\bibitem{Zhang:2019:ICDE}
Yongqi Zhang, Quanming Yao, Yingxia Shao, and Lei Chen.
\newblock Nscaching: simple and efficient negative sampling for knowledge graph
  embedding.
\newblock In {\em 2019 IEEE 35th International Conference on Data Engineering
  (ICDE)}, pages 614--625, 2019.

\bibitem{Zheng:2021:WWW}
Yu~Zheng, Chen Gao, Xiang Li, Xiangnan He, Yong Li, and Depeng Jin.
\newblock Disentangling user interest and conformity for recommendation with
  causal embedding.
\newblock In {\em Proceedings of the Web Conference 2021}, pages 2980--2991,
  2021.

\bibitem{zhu2022auc}
Dixian Zhu, Gang Li, Bokun Wang, Xiaodong Wu, and Tianbao Yang.
\newblock When auc meets dro: Optimizing partial auc for deep learning with
  non-convex convergence guarantee.
\newblock In {\em International Conference on Machine Learning}, pages
  27548--27573. PMLR, 2022.

\bibitem{Zhu:2021:SIGKDD}
Ziwei Zhu, Yun He, Xing Zhao, and James Caverlee.
\newblock Popularity bias in dynamic recommendation.
\newblock In {\em Proceedings of the 27th ACM SIGKDD Conference on Knowledge
  Discovery \& Data Mining}, pages 2439--2449, 2021.

\bibitem{Zhu:2021:KDD}
Ziwei Zhu, Yun He, Xing Zhao, and James Caverlee.
\newblock Popularity bias in dynamic recommendation.
\newblock In {\em Proceedings of the 27th ACM SIGKDD Conference on Knowledge
  Discovery \& Data Mining}, pages 2439--2449, 2021.

\bibitem{Zhu:2021:WSDM}
Ziwei Zhu, Yun He, Xing Zhao, Yin Zhang, Jianling Wang, and James Caverlee.
\newblock Popularity-opportunity bias in collaborative filtering.
\newblock In {\em Proceedings of the 14th ACM International Conference on Web
  Search and Data Mining}, pages 85--93, 2021.

\end{thebibliography}
\bibliographystyle{plain}

\newpage
\appendix
\onecolumn

\section{Experimental details}
\subsection{Dataset Statistics}\label{appendix:data}
\begin{table}[!]
	\centering
	\small
	\caption{Dataset Statistics}\label{Table:Dataset}
	\begin{tabular}{lrrcrr}
		\toprule[1.2pt]
		~           & users   & items  & number of popular items & training set  &test set  \\ \cline{1-6}
		MovieLens-100k   &   943    &  1,682 &252  &    80k	   & 20k 	\\
		MovieLens-1M    &   6,040  &  3,952& 592  &   800k     & 200k  \\
		Yahoo!-R3       &   5,400  &  1,000 &150  &   146k      & 36k  \\
		\bottomrule[1.2pt]
	\end{tabular}
\end{table}
\subsection{Parameter settings}\label{apendix:Parameter settings}
\begin{table}[htbp]
	\caption{Parameter settings.}
	\centering
	\resizebox{0.8\textwidth}{!}{	
		\begin{tabular}{lccccccc}
			\toprule
			& Batch Size    & Learning Rate    & lr\_Decay Rate(epoch) & Regulation Constant    & $\alpha$     & $\beta$     & $\gamma$ \\
			\midrule
			100K-MF & 128   & 0.1   & 0.1[20,60,80] & 1.00E-04 & 0.75  & 0.01  & 0.006 \\
			100K-LightGCN & 128   & 0.1   & 0.1[20,60,80] & 1.00E-04 & 0.75  & 0.01  & 0.006 \\
			1M-MF & 128   & 5.00E-04 & 1     & 1.00E-05 & 0.75  & 0.01  & 0.006 \\
			1M-LightGCN & 1024  & 5.00E-04 & 1     & 0     & 0.75  & 0.01  & 0.006 \\
			yahoo-MF & 128   & 5.00E-04 & 1     & 0     & 0.65  & 0.001 & 0.004 \\
			yahoo-LightGCN & 128   & 5.00E-04 & 1     & 0     & 0.65  & 0.001 & 0.004 \\
			\bottomrule
		\end{tabular}%
	}
	\label{table:Parameter settings}%
\end{table}%
\begin{table}[!]
	\caption{The performance of MSE, FPR, and FNR}\label{tab:msetpr}
	\centering
	\resizebox{0.8\textwidth}{!}{
		\begin{tabular}{ccccccccccc}
			\toprule
			&       &\multicolumn{3}{c}{MovieLens100K} &\multicolumn{3}{c}{MovieLens1M}&\multicolumn{3}{c}{Yahoo!-R3} \\ \cline{3-5} \cline{6-8} \cline{9-11}  
			\multirow{3}{*}{CF Model} & \multirow{3}{*}{Method} & \multirow{3}{*}{MSE} & \multicolumn{2}{c}{BIAS} & \multirow{3}{*}{MSE} & \multicolumn{2}{c}{BIAS}& \multirow{3}{*}{MSE} & \multicolumn{2}{c}{BIAS}\\
			\cmidrule{4-5} \cmidrule{7-8}     \cmidrule{10-11}            &       &       & FPR   & FNR&       & FPR   & FNR&       & FPR   & FNR \\
			
			\midrule
			\multirow{10}[2]{*}{MF} & RNS & 0.2361 & 0.6324 & 0.8741 &0.2495 & 0.6325 & 0.9226&0.2474 & 0.8666 & 0.9011 \\ 
			& FairStatic & 0.236 & 0.6481 & 0.8659&0.2482 & 0.6457 & 0.91&0.253 & 0.8652 & 0.8985  \\ 
			&DNS & 0.231 & 0.6157 & 0.869&0.2451 & 0.6127 & 0.9152& \underline{0.2398} & \underline{0.861} & \underline{0.8963} \\ 
			&PNS & 0.2339 & 0.6021 & 0.8662&0.2438 & \underline{0.6069} & \underline{0.905}& 0.2584 & 0.8762 & 0.9083  \\ 
			&IPS-CN & 0.2474 & 0.7101 & 0.9153&0.2655 & 0.7563 & 0.9501&0.2723 & 0.9679 & 0.9792  \\ 
			&MACR & 0.2374 & 0.6059 & 0.8635&0.2439 & 0.6327 & 0.9188&0.2546 & 0.8665 & 0.9009  \\ 
			&SAMREG & 0.2483 & 0.6081 & 0.8657&0.2572 & 0.6323 & 0.9169&0.2567 & 0.8743 & 0.9081  \\ 
			&CausE & 0.2352 & 0.6045 & 0.8641&\underline{0.2434} & 0.6296 & 0.9161&0.2444 & 0.8776 & 0.9096  \\ 
			&BC\_LOSS & \underline{0.2295} & \underline{0.5977} & \underline{0.8626}&0.2455 & 0.6512 & 0.923&0.2639 & 0.8811 & 0.9147  \\ 
			&Proposed & \textbf{0.2152} & \textbf{0.5678} & \textbf{0.8548} &\textbf{0.2281} & \textbf{0.580}6 & \textbf{0.9018}&\textbf{0.2298} & \textbf{0.8603} & \textbf{0.8959} \\ \hline 
			\multirow{10}[2]{*}{LightGCN} & RNS & 0.2316 & 0.624 & 0.8715 & 0.2495 & 0.6325 & 0.9226&0.2465 & 0.8583 & 0.8948 \\ 
			&FairStatic & 0.2313 & 0.6723 & 0.8743&0.255 & 0.7348 & 0.9344 &0.2461 & 0.856 & 0.893  \\ 
			&DNS & 0.2298 & 0.6106 & 0.8666&0.2478 & 0.6125 & 0.9118   &\underline{0.2437} & \underline{0.8529} & \underline{0.8905}  \\ 
			&PNS & 0.2303 & 0.5955 & 0.8619&0.2424 & 0.6692 & 0.9243 &0.2462 & 0.8704 & 0.9044  \\ 
			&IPS-CN & 0.2349 & 0.663 & 0.8844 &0.2507 & 0.706 & 0.9369 & 0.2577 & 0.8889 & 0.9181\\ 
			&MACR & 0.2372 & 0.6115 & 0.8686 &0.2441 & 0.6129 & 0.9139 &0.2779 & 0.9117 & 0.9351 \\ 
			&SAMREG & 0.2296 & 0.5926 & 0.8599 &\underline{0.2394} & 0.6197 & 0.9129&0.2487 & 0.8609 & 0.8978 \\ 
			&CausE & 0.2446 & \underline{0.5835} & \underline{0.8566}  &0.2417 & \underline{0.6065} & \underline{0.908} &0.2456 & 0.8787 & 0.9092\\ 
			&BC\_LOSS & \underline{0.2182} & 0.6356 & 0.8701 &0.2411 & 0.7918 & 0.9598&0.2483 & 0.8668 & 0.9012 \\ 
			&Proposed & \textbf{0.2175} & \textbf{0.562} & \textbf{0.8516} &\textbf{0.2375}& \textbf{0.5812} &\textbf{0.9008} &\textbf{0.2415} & \textbf{0.8462} & \textbf{0.8848} \\ \bottomrule
		\end{tabular}%
	}
	\label{tab:MSEbias}%
\end{table}%
\subsection{Experimental results}
Table~\ref{tab:msetpr} presents the performance of MSE, FPR, and FNR. MSE measures the model's performance, where the prediction is calculated by applying a sigmoid function after normalizing the scores. Meanwhile, FPR evaluates the overestimation rate of the top-5 ranking list, while FNR evaluates the underestimation rate of the top-5 ranking list. Both FPR, FNR metrics reflect the overall debiasing performance. The overall ranking performance and popularity debiasing performance of top-10 and top-20 are presented in Table~\ref{tab:performance10} and Table~\ref{tab:performance20}, respectively. 

\begin{table*}[!]
	\centering
	\caption{Overall performance of top-10.}\label{tab:performance10}
	\renewcommand{\arraystretch}{0.8}
	\resizebox{1\textwidth}{!}{
		\begin{tabular}{ccccccccccc}
			\toprule
			\multirow{3}{*}{Dataset} & \multirow{3}{*}{CF Model} & \multirow{3}{*}{Method} & \multicolumn{4}{c}{Ranking Perfomence} & \multicolumn{4}{c}{Popularity Debiasing Performence} \\
			\cmidrule(lr){4-7} \cmidrule(lr){8-11}          &       &       & Pre & Recall & F1 & NDCG & OHR & OCR & UCR & UHR \\
			\midrule
			\multirow{20}[4]{*}{MovieLens-100K} & \multirow{10}[2]{*}{MF} & RNS   & 0.3074 & 0.1992 & 0.2003 & 0.3698 & 0.6924 & 0.0101 & 0.9286 & 0.6895 \\
			&       & FairStatic & 0.2804 & 0.2033 & 0.1986 & 0.351 & 0.5846 & 0.6445 & 0.8737 & 0.7212 \\
			&       & DNS   & 0.3241 & 0.2114 & 0.2124 & 0.3865 & 0.6756 & 0.0138 & 0.9284 & 0.6694 \\
			&       & PNS   & \underline{0.3415}  & 0.218 & 0.223 & 0.4026 & 0.6443 & 0.289 & 0.9105 & 0.6761 \\
			&       & IPS-CN & 0.2472 & 0.1462 & 0.1487 & 0.1538 & 0.6695 & 0.5888 & 0.8746 & 0.8277 \\
			&       & MACR  & 0.335 & 0.2206 & 0.2226 & 0.3977 & 0.6522 & 0.259 & 0.9071 & 0.6744 \\
			&       & SAMREG & 0.3363 & 0.2201 & 0.2228 & 0.3986 & 0.6447 & 0.3201 & 0.8946 & 0.6855 \\
			&       & CausE & 0.3356 & 0.2196 & 0.2223 & 0.3968 & 0.6478 & 0.2988 & 0.9064 & 0.676 \\
			&       & BC\_LOSS & 0.3388 & \underline{0.2219}  & \underline{0.2243} & \underline{0.40}5 & 0.6507 & 0.2195 & 0.9151 & 0.6635 \\
			&       & \textbf{Proposed} & \textbf{0.3584} & \textbf{0.2295} & \textbf{0.2342} & \textbf{0.4307} & 0.6407 & 0.0374 & 0.9254 & 0.64 \\
			\cmidrule{2-11}          & \multirow{10}[2]{*}{LightGCN} & RNS   & 0.3194 & 0.2096 & 0.2103 & 0.3805 & 0.6801 & 0.0182 & 0.9285 & 0.6715 \\
			&       & FairStatic & 0.3077 & 0.2117 & 0.2109 & 0.3753 & 0.6199 & 0.5816 & 0.8886 & 0.7015 \\
			&       & DNS   & 0.3303 & 0.2179 & 0.2186 & 0.395 & 0.6695 & 0.026 & 0.9269 & 0.6595 \\
			&       & PNS   & 0.3487 & 0.227 & 0.2301 & 0.4119 & 0.6411 & 0.2335 & 0.9103 & 0.6652 \\
			&       & IPS-CN & 0.2836 & 0.1846 & 0.1863 & 0.3386 & 0.6758 & 0.493 & 0.891 & 0.7422 \\
			&       & MACR  & 0.339 & 0.2191 & 0.2221 & 0.4002 & 0.6532 & 0.1713 & 0.9178 & 0.6668 \\
			&       & SAMREG & 0.3439 & 0.2236 & 0.2267 & 0.4082 & 0.637 & 0.3318 & 0.896 & 0.6765 \\
			&       & CausE & \underline{0.3536} & \underline{0.2318} & \underline{0.2344} & \underline{0.42}  & 0.6264 & 0.3285 & 0.9   & 0.6611 \\
			&       & BC\_LOSS & 0.308 & 0.2073 & 0.2062 & 0.3709 & 0.6917 & 0.0321 & 0.9269 & 0.6738 \\
			&       & \textbf{Proposed} & \textbf{0.3657} & \textbf{0.2339} & \textbf{0.238}9 & \textbf{0.4377} & 0.6332 & 0.0711 & 0.9233 & 0.6328 \\
			\midrule
			\multirow{20}[4]{*}{yahoo} & \multirow{10}[2]{*}{MF} & RNS   & 0.1029 & 0.1517 & 0.117 & 0.1575 & 0.8915 & 0.1893 & 0.9776 & 0.6179 \\
			&       & FairStatic & 0.0997 & 0.1491 & 0.1142 & 0.1567 & 0.8887 & 0.4509 & 0.9747 & 0.6291 \\
			&       & DNS   & \textbf{0.1053} & \textbf{0.1559} &\textbf{0.1199}  & \underline{0.1624} & 0.8913 & 0.1632 & 0.9788 & 0.6069 \\
			&       & PNS   & 0.0933 & 0.1374 & 0.1062 & 0.1453 & 0.8944 & 0.3864 & 0.9749 & 0.6527 \\
			&       & IPS-CN & 0.0289 & 0.0372 & 0.0305 & 0.0368 & 0.5813 & 0.9792 & 0.9504 & 0.9038 \\
			&       & MACR  & 0.1026 & 0.1516 & 0.1168 & 0.1581 & 0.8954 & 0.1664 & 0.9799 & 0.6154 \\
			&       & SAMREG & 0.0982 & 0.1441 & 0.1114 & 0.1458 & 0.8947 & 0.2626 & 0.9753 & 0.6362 \\
			&       & CausE & 0.0967 & 0.142 & 0.1098 & 0.1447 & 0.8999 & 0.142 & 0.9816 & 0.6324 \\
			&       & BC\_LOSS & 0.1012 & 0.1459 & 0.1139 & 0.1408 & 0.8881 & 0.3374 & 0.971 & 0.6424 \\
			&       & \textbf{Proposed} & \underline{0.1047} & \underline{0.1554} & \underline{0.1194} & \textbf{0.1638} & 0.8948 & 0.0866 & 0.9828 & 0.5994 \\
			\cmidrule{2-11}          & \multirow{10}[2]{*}{LightGCN} & RNS   & 0.1081 & 0.1594 & 0.123 & 0.1664 & 0.8813 & 0.2029 & 0.9741 & 0.6056 \\
			&       & FairStatic & 0.108 & 0.1597 & 0.1231 & 0.1673 & 0.8718 & 0.4748 & 0.9653 & 0.6178 \\
			&       & DNS   & \underline{0.1109} & \underline{0.1643} & \underline{0.1265} & \underline{0.172} & 0.8801 & 0.2164 & 0.9722 & 0.5976 \\
			&       & PNS   & 0.0975 & 0.1433 & 0.1109 & 0.1514 & 0.8853 & 0.4443 & 0.9682 & 0.649 \\
			&       & IPS-CN & 0.0839 & 0.1236 & 0.0954 & 0.129 & 0.9087 & 0.1088 & 0.9802 & 0.6785 \\
			&       & MACR  & 0.0682 & 0.0993 & 0.0774 & 0.1045 & 0.7685 & 0.5147 & 0.9835 & 0.7331 \\
			&       & SAMREG & 0.1069 & 0.1566 & 0.1213 & 0.1614 & 0.8804 & 0.2886 & 0.9703 & 0.6175 \\
			&       & CausE & 0.0951 & 0.143 & 0.1092 & 0.1465 & 0.9042 & 0.0506 & 0.9835 & 0.9835 \\
			&       & BC\_LOSS & 0.1029 & 0.1527 & 0.1174 & 0.1572 & 0.8926 & 0.1092 & 0.9807 & 0.6131 \\
			&       & \textbf{Proposed} & \textbf{0.1159} & \textbf{0.1713} & \textbf{0.132} & \textbf{0.1822} & 0.8814 & 0.2486 & 0.9724 & 0.5799 \\
			\midrule
			\multicolumn{1}{c}{\multirow{20}[4]{*}{1M}} & \multirow{10}[2]{*}{MF} & RNS   & 0.3124 & 0.124 & 0.1479 & 0.3523 & 0.6871 & 0.0234 & 0.9484 & 0.8076 \\
			&       & FairStatic & 0.2919 & 0.1425 & 0.1606 & 0.34  & 0.5923 & 0.6155 & 0.8989 & 0.8174 \\
			&       & DNS   & 0.3309 & 0.1359 & 0.1605 & 0.3723 & 0.6687 & 0.0348 & 0.9474 & 0.7897 \\
			&       & PNS   & \underline{0.3362} & \underline{0.1542} & \underline{0.1759} & \underline{0.3802} & 0.6472 & 0.3165 & 0.9253 & 0.7835 \\
			&       & IPS-CN & 0.2128 & 0.0841 & 0.1007 & 0.2356 & 0.6905 & 0.4761 & 0.9219 & 0.8948 \\
			&       & MACR  & 0.3332 & 0.1369 & 0.162 & 0.3738 & 0.6664 & 0.0477 & 0.9466 & 0.7882 \\
			&       & SAMREG & 0.3286 & 0.1382 & 0.1621 & 0.3689 & 0.6698 & 0.0569 & 0.9452 & 0.7888 \\
			&       & CausE & 0.3218 & 0.131 & 0.155 & 0.3629 & 0.6775 & 0.0283 & 0.9478 & 0.797 \\
			&       & BC\_LOSS & 0.3033 & 0.1262 & 0.1488 & 0.3367 & 0.6955 & 0.0758 & 0.946 & 0.8039 \\
			&       & \textbf{Proposed} & \textbf{0.3618} & \textbf{0.1589} & \textbf{0.184} & \textbf{0.4074} & 0.6383 & 0.064 & 0.9436 & 0.7561 \\
			\cmidrule{2-11}          & \multirow{10}[2]{*}{LightGCN} & RNS   & 0.3296 & 0.1418 & 0.1649 & 0.368 & 0.6595 & 0.2643 & 0.928 & 0.7977 \\
			&       & FairStatic & 0.2299 & 0.1088 & 0.1225 & 0.2603 & 0.4813 & 0.8095 & 0.8585 & 0.8973 \\
			&       & DNS   & 0.3359 & 0.1446 & 0.1683 & 0.3757 & 0.655 & 0.2629 & 0.9269 & 0.7931 \\
			&       & PNS   & 0.2866 & 0.1252 & 0.1448 & 0.3211 & 0.6563 & 0.5157 & 0.9089 & 0.84 \\
			&       & IPS-CN & 0.2501 & 0.1031 & 0.1224 & 0.2828 & 0.6859 & 0.4773 & 0.9219 & 0.8648 \\
			&       & MACR  & 0.3323 & 0.1408 & 0.1649 & 0.3735 & 0.6647 & 0.1158 & 0.9446 & 0.7848 \\
			&       & SAMREG & 0.3326 & 0.1464 & 0.1464 & 0.3701 & 0.6454 & 0.4073 & 0.9133 & 0.8015 \\
			&       & CausE & \underline{0.34}  & \underline{0.1511} & \underline{0.1737} & \underline{0.381} & 0.6452 & 0.3329 & 0.9191 & 0.7901 \\
			&       & BC\_LOSS & 0.1822 & 0.0689 & 0.082 & 0.2025 & 0.8172 & 0.0007 & 0.951 & 0.8952 \\
			&       & \textbf{Proposed} & \textbf{0.3612} & \textbf{0.1622} & \textbf{0.186} & \textbf{0.4063} & 0.633 & 0.2676 & 0.9246 & 0.7654 \\
			\bottomrule
		\end{tabular}
	}
\end{table*}%

\begin{table*}[!]
	\centering
	\caption{Overall performance of top-20.}\label{tab:performance20}
	\renewcommand{\arraystretch}{0.8}
	\resizebox{1\textwidth}{!}{
		\begin{tabular}{ccccccccccc}
			\toprule
			\multirow{3}{*}{Dataset} & \multirow{3}{*}{CF Model} & \multirow{3}{*}{Method} & \multicolumn{4}{c}{Ranking Perfomence} & \multicolumn{4}{c}{Popularity Debiasing Performence} \\
			\cmidrule(lr){4-7} \cmidrule(lr){8-11}          &       &       & Pre & Recall & F1 & NDCG & OHR & OCR & UCR & UHR \\
			\midrule
			\multirow{20}[4]{*}{MovieLens-100K} & \multirow{10}[2]{*}{MF} & RNS   & 0.2478 & 0.3006 & 0.2236 & 0.3673 & 0.7505 & 0.0552 & 0.9258 & 0.5285 \\
			&       & FairStatic & 0.2156 & 0.2955 & 0.2078 & 0.3462 & 0.6323 & 0.8746 & 0.7932 & 0.6273 \\
			&       & DNS   & 0.259 & 0.3141 & 0.234 & 0.3822 & 0.7387 & 0.1087 & 0.9218 & 0.5108 \\
			&       & PNS   & \underline{0.2741} & \underline{0.3373} & \underline{0.2511} & 0.4016 & 0.7031 & 0.5653 & 0.8649 & 0.5215 \\
			&       & IPS-CN & 0.2067 & 0.2379 & 0.1833 & 0.284 & 0.7191 & 0.7356 & 0.8145 & 0.713 \\
			&       & MACR  & 0.2695 & 0.336 & 0.2478 & 0.3968 & 0.7162 & 0.4723 & 0.8781 & 0.5157 \\
			&       & SAMREG & 0.271 & 0.3372 & 0.2491 & 0.3978 & 0.7082 & 0.5221 & 0.8585 & 0.5293 \\
			&       & CausE & 0.2701 & 0.3371 & 0.2485 & 0.3977 & 0.7135 & 0.4631 & 0.8685 & 0.5195 \\
			&       & BC\_LOSS & 0.2707 & 0.3359 & 0.2484 & \underline{0.4029} & 0.7173 & 0.4375 & 0.881 & 0.5101 \\
			&       & \textbf{Proposed} & \textbf{0.2855} & \textbf{0.3443} & \textbf{0.2596} & \textbf{0.4236} & 0.7098 & 0.2126 & 0.914 & 0.4668 \\
			\cmidrule{2-11}          & \multirow{10}[2]{*}{LightGCN} & RNS   & 0.2537 & 0.3116 & 0.2303 & 0.3761 & 0.7434 & 0.1503 & 0.9218 & 0.5154 \\
			&       & FairStatic & 0.2099 & 0.2912 & 0.2035 & 0.3336 & 0.625 & 0.883 & 0.7784 & 0.6463 \\
			&       & DNS   & 0.2619 & 0.3213 & 0.2383 & 0.387 & 0.7345 & 0.1773 & 0.9175 & 0.5005 \\
			&       & PNS   & 0.2758 & 0.3408 & 0.2528 & 0.4068 & 0.7048 & 0.5365 & 0.8693 & 0.5119 \\
			&       & IPS-CN & 0.2261 & 0.2854 & 0.2083 & 0.3364 & 0.7263 & 0.6784 & 0.8375 & 0.6177 \\
			&       & MACR  & 0.2729 & 0.3355 & 0.249 & 0.4002 & 0.7145 & 0.4271 & 0.8861 & 0.5044 \\
			&       & SAMREG & 0.2784 & 0.3462 & 0.2552 & 0.4099 & 0.6985 & 0.5397 & 0.8443 & 0.5208 \\
			&       & CausE & \underline{0.2837} & \textbf{0.3548} & \underline{0.2612} & \underline{0.4199} & 0.6918 & 0.5827 & 0.8429 & 0.5066 \\
			&       & BC\_LOSS & 0.2509 & 0.3158 & 0.2308 & 0.3746 & 0.7474 & 0.1322 & 0.9186 & 0.5066 \\
			&       & \textbf{Proposed} & \textbf{0.2908} & \underline{0.3524} & \textbf{0.2642} & \textbf{0.4317} & 0.7051 & 0.2778 & 0.9075 & 0.4585 \\
			\midrule
			\multirow{20}[4]{*}{yahoo} & \multirow{10}[2]{*}{MF} & RNS   & 0.0755 & 0.2215 & 0.1078 & 0.1854 & 0.9204 & 0.2842 & 0.9688 & 0.4815 \\
			&       & FairStatic & 0.069 & 0.2059 & 0.0993 & 0.1785 & 0.9114 & 0.9183 & 0.9522 & 0.5358 \\
			&       & DNS   & \underline{0.0769} & \textbf{0.2262} & \textbf{0.1101} & \textbf{0.1906} & 0.9193 & 0.3849 & 0.9656 & 0.4748 \\
			&       & PNS   & 0.0672 & 0.1959 & 0.0958 & 0.168 & 0.9201 & 0.7036 & 0.9592 & 0.5485 \\
			&       & IPS-CN & 0.0253 & 0.066 & 0.0346 & 0.0488 & 0.7813 & 0.9824 & 0.9251 & 0.8715 \\
			&       & MACR  & 0.0752 & 0.2206 & 0.1075 & 0.1857 & 0.9225 & 0.3764 & 0.9702 & 0.4806 \\
			&       & SAMREG & 0.0737 & 0.216 & 0.1053 & 0.1741 & 0.9204 & 0.327 & 0.9674 & 0.4966 \\
			&       & CausE & 0.0708 & 0.2077 & 0.1012 & 0.171 & 0.9256 & 0.3281 & 0.9737 & 0.5038 \\
			&       & BC\_LOSS & \textbf{0.0773} & \underline{0.2239} & \underline{0.1099} & 0.1728 & 0.9157 & 0.4967 & 0.5044 & 0.4967 \\
			&       & \textbf{Proposed} & 0.0753 & 0.2221 & 0.1079 & \underline{0.1905} & 0.9237 & 0.2744 & 0.9729 & 0.4717 \\
			\cmidrule{2-11}          & \multirow{10}[2]{*}{LightGCN} & RNS   & 0.0783 & 0.2291 & 0.1118 & 0.1941 & 0.9129 & 0.3962 & 0.9574 & 0.4797 \\
			&       & FairStatic & 0.0776 & 0.2276 & 0.1109 & 0.1942 & 0.9019 & 0.8021 & 0.9361 & 0.5119 \\
			&       & DNS   & \underline{0.0806} & \underline{0.2358} & \underline{0.115} & \underline{0.2009} & 0.9118 & 0.4293 & 0.9546 & 0.469 \\
			&       & PNS   & 0.0695 & 0.2019 & 0.099 & 0.1742 & 0.8966 & 0.9304 & 0.9211 & 0.6073 \\
			&       & IPS-CN & 0.0632 & 0.1874 & 0.0907 & 0.1532 & 0.9304 & 0.1297 & 0.9368 & 0.5532 \\
			&       & MACR  & 0.0399 & 0.1163 & 0.0573 & 0.1095 & 0.7748 & 0.9909 & 0.9777 & 0.7057 \\
			&       & SAMREG & 0.0771 & 0.2244 & 0.1099 & 0.1882 & 0.911 & 0.4855 & 0.9522 & 0.4958 \\
			&       & CausE & 0.0698 & 0.2088 & 0.1005 & 0.1732 & 0.9293 & 0.0786 & 0.9803 & 0.4927 \\
			&       & BC\_LOSS & 0.0749 & 0.2211 & 0.1072 & 0.1847 & 0.9201 & 0.1794 & 0.9731 & 0.4761 \\
			&       & \textbf{Proposed} & \textbf{0.0824} & \textbf{0.2422} & \textbf{0.118} & \textbf{0.2099} & 0.9142 & 0.5388 & 0.9555 & 0.4519 \\
			\midrule
			\multicolumn{1}{c}{\multirow{20}[4]{*}{1M}} & \multirow{10}[2]{*}{MF} & RNS   & 0.257 & 0.1928 & 0.178 & 0.3287 & 0.7424 & 0.0638 & 0.9449 & 0.7012 \\
			&       & FairStatic & 0.2309 & 0.214 & 0.1814 & 0.3181 & 0.6433 & 0.8103 & 0.8419 & 0.744 \\
			&       & DNS   & 0.2726 & 0.2105 & 0.192 & 0.3497 & 0.7263 & 0.0921 & 0.9413 & 0.6773 \\
			&       & PNS   & \underline{0.2773} & \underline{0.2386} & \underline{0.2079} & \underline{0.3639} & 0.6999 & 0.5152 & 0.8916 & 0.6765 \\
			&       & IPS-CN & 0.1791 & 0.1343 & 0.1242 & 0.2236 & 0.7524 & 0.551 & 0.9053 & 0.8308 \\
			&       & MACR  & 0.2746 & 0.2124 & 0.1939 & 0.3514 & 0.7245 & 0.1224 & 0.9399 & 0.6749 \\
			&       & SAMREG & 0.2709 & 0.2145 & 0.1934 & 0.3489 & 0.7271 & 0.1272 & 0.9383 & 0.6752 \\
			&       & CausE & 0.2649 & 0.2027 & 0.1854 & 0.3403 & 0.7345 & 0.0722 & 0.9433 & 0.6874 \\
			&       & BC\_LOSS & 0.2513 & 0.2007 & 0.1807 & 0.3195 & 0.7474 & 0.1552 & 0.9378 & 0.6926 \\
			&       & \textbf{Proposed} & \textbf{0.2985} & \textbf{0.2456} & \textbf{0.2185} & \textbf{0.3878} & 0.7006 & 0.1842 & 0.9305 & 0.6283 \\
			\cmidrule{2-11}          & \multirow{10}[2]{*}{LightGCN} & RNS   & 0.2738 & 0.2225 & 0.1984 & 0.351 & 0.7132 & 0.4206 & 0.9046 & 0.6889 \\
			&       & FairStatic & 0.1942 & 0.1756 & 0.1491 & 0.2546 & 0.5605 & 0.8676 & 0.7897 & 0.8379 \\
			&       & DNS   & 0.2806 & 0.2298 & 0.2042 & 0.3604 & 0.707 & 0.4405 & 0.9015 & 0.6788 \\
			&       & PNS   & 0.2385 & 0.2009 & 0.1754 & 0.3074 & 0.7018 & 0.6573 & 0.8708 & 0.7558 \\
			&       & IPS-CN & 0.2045 & 0.1632 & 0.1473 & 0.2654 & 0.7366 & 0.6386 & 0.8932 & 0.7925 \\
			&       & MACR  & 0.2757 & 0.2203 & 0.1985 & 0.3536 & 0.7201 & 0.3132 & 0.933 & 0.6685 \\
			&       & SAMREG & 0.2798 & 0.2337 & 0.2057 & 0.3574 & 0.6961 & 0.5911 & 0.8771 & 0.6894 \\
			&       & CausE & \underline{0.2836} & \underline{0.2386} & \underline{0.2093} & \underline{0.3664} & 0.6989 & 0.53  & 0.8841 & 0.6777 \\
			&       & BC\_LOSS & 0.1507 & 0.1128 & 0.1025 & 0.1902 & 0.8486 & 0.0007 & 0.951 & 0.8285 \\
			&       & \textbf{Proposed} & \textbf{0.3} & \textbf{0.2528} & \textbf{0.2217} & \textbf{0.39} & 0.6928 & 0.4339 & 0.8977 & 0.6437 \\
			\bottomrule
			
		\end{tabular}
	}
\end{table*}%

\newpage

\section{Proofs}
\subsection{Proof of Proposition~\ref{theorem:bias_mse}}\label{appendix:bias_mse}
\begin{proposition}
	Denote $y^* =\mathbb E_y [y|\mathbf{x}]$ as the Bayesian optimal prediction over all measurable functions. For  fixed complexity of the functional class $\mathcal{G}$ and joint distribution $p(x,y)$ on $\mathcal{Z}=\mathcal{X}\times\mathcal{Y}$, assuming the noise $y^*-y$ is zero-mean, then
	\begin{eqnarray}
		\textsc{Bias}_\mathcal{D} &=& \textsc{Mse}_{\mathcal{D}}  - const  \nonumber
	\end{eqnarray}
\end{proposition}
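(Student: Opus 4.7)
The plan is to apply the classical bias–variance–noise decomposition of the squared error pointwise and then show that the ``other'' two terms are exactly what the statement calls $\text{const}$.

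First I would rewrite the integrand of $\textsc{Mse}_\mathcal{D}$ by adding and subtracting $y^\ast$ inside the square,
\[
(g(\mathbf{x};\mathcal{D})-y)^2 = (g(\mathbf{x};\mathcal{D})-y^\ast)^2 + 2(g(\mathbf{x};\mathcal{D})-y^\ast)(y^\ast-y) + (y^\ast-y)^2 .
\]
Taking $\mathbb{E}_y[\,\cdot\mid\mathbf{x}]$ kills the cross term because the zero-mean noise hypothesis (together with $y^\ast=\mathbb{E}_y[y\mid\mathbf{x}]$) gives $\mathbb{E}_y[y^\ast-y\mid\mathbf{x}]=0$. Next I would add and subtract $\mathbb{E}_\mathcal{D}[g(\mathbf{x};\mathcal{D})]$ inside the first square and take $\mathbb{E}_\mathcal{D}$, using the standard identity
\[
\mathbb{E}_\mathcal{D}\bigl[(g(\mathbf{x};\mathcal{D})-y^\ast)^2\bigr] = \bigl(\mathbb{E}_\mathcal{D}[g(\mathbf{x};\mathcal{D})]-y^\ast\bigr)^2 + \mathbb{E}_\mathcal{D}\bigl[(g(\mathbf{x};\mathcal{D})-\mathbb{E}_\mathcal{D}[g(\mathbf{x};\mathcal{D})])^2\bigr].
\]
Finally I would integrate back against $p(\mathbf{x},y)\,p(\mathcal{D})$ over $\mathbf{x}$ and $\mathcal{D}$ to obtain
\[
\textsc{Mse}_\mathcal{D} = \textsc{Bias}_\mathcal{D} + \textsc{Var}_\mathcal{D} + \textsc{Noise},
\]
where $\textsc{Var}_\mathcal{D}$ is the $\mathbf{x}$-integrated variance of $g(\mathbf{x};\mathcal{D})$ and $\textsc{Noise}=\int p(\mathbf{x},y)(y^\ast-y)^2 d\mathbf{x}\,dy$. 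Rearranging yields the claimed $\textsc{Bias}_\mathcal{D}=\textsc{Mse}_\mathcal{D}-\text{const}$ once I argue that $\textsc{Var}_\mathcal{D}+\textsc{Noise}$ is independent of the particular predictor being compared.

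The only nontrivial step I expect is precisely this last identification: $\textsc{Noise}$ is manifestly a property of the joint distribution $p(x,y)$ alone and is therefore constant by assumption, but pinning $\textsc{Var}_\mathcal{D}$ to a constant requires reading the phrase ``fixed complexity of the functional class $\mathcal{G}$'' as fixing the variance profile of the family of learners $g(\cdot;\mathcal{D})$ over draws of $\mathcal{D}$. I would state this interpretation explicitly, noting that two estimators drawn from classes of the same capacity and trained on i.i.d.\ $\mathcal{D}$ share the same variance term so that differences between them are driven entirely by the bias. With that convention in place, absorbing $\textsc{Var}_\mathcal{D}+\textsc{Noise}$ into $\text{const}$ completes the argument, and everything else is routine algebra with no measurability subtleties beyond Fubini on the joint integral in $\mathbf{x}$ and $\mathcal{D}$.
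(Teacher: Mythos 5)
Your proposal is correct and follows essentially the same route as the paper: both derive the classical three-term decomposition $\textsc{Mse}_\mathcal{D} = \textsc{Var} + \textsc{Bias}_\mathcal{D} + \textsc{Noise}$ by inserting $\mathbb{E}_\mathcal{D}[g(\mathbf{x};\mathcal{D})]$ and $y^\ast$ and killing the two cross terms (the paper merely performs the two insertions in the opposite order, which is immaterial). Your explicit flagging of the one non-routine step --- that ``fixed complexity of $\mathcal{G}$'' must be read as fixing the variance term so that it can be absorbed into the constant --- is exactly the assumption the paper also invokes, though it states it without further justification.
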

\begin{proof}
	We first compute the squared error of single data point $\mathbf{x}$ over different training set $\mathcal{D}$
	\begin{eqnarray}
		\label{eq:first_seg}
		&&	\int_{\mathcal{D}}(g(\mathbf{x};\mathcal{D}) - y)^2p(\mathcal{D})d\mathcal{D} \label{eq:se}\\
		&=& \int_{\mathcal{D}}\{g(\mathbf{x};\mathcal{D}) - \mathbb{E}_\mathcal{D}[g(\mathbf{x};\mathcal{D})] + \mathbb{E}_\mathcal{D}[g(\mathbf{x};\mathcal{D})] - y\}^2p(\mathcal{D})d\mathcal{D} \nonumber \\
		&=&  \int_{\mathcal{D}}\{\{g(\mathbf{x};\mathcal{D}) - \mathbb{E}_\mathcal{D}[g(\mathbf{x};\mathcal{D})] \}^2p(\mathcal{D}) + \{\mathbb{E}_\mathcal{D}[g(\mathbf{x};\mathcal{D})] - y\}^2p(\mathcal{D}) \nonumber \\
		&&+ 2\{g(\mathbf{x};\mathcal{D}) - \mathbb{E}_\mathcal{D}[g(\mathbf{x};\mathcal{D})]\}\{\mathbb{E}_\mathcal{D}[g(\mathbf{x};\mathcal{D})] - y\}p(\mathcal{D})\}d\mathcal{D} \nonumber \\
		&=& \int_{\mathcal{D}}\{\{g(\mathbf{x};\mathcal{D}) - \mathbb{E}_\mathcal{D}[g(\mathbf{x};\mathcal{D})] \}^2p(\mathcal{D}) + \{\mathbb{E}_\mathcal{D}[g(\mathbf{x};\mathcal{D})] - y\}^2p(\mathcal{D}) \}d\mathcal{D}. \label{eq:sepoint}
	\end{eqnarray}
	Furthermore, by taking the expectation of the square error in Eq~\eqref{eq:se} over all sample $\mathbf{x}$, we obtain the general mean squared error $\textsc{Mse}_{\mathcal{D}}$
	\begin{eqnarray}
		\label{eq:first_result}
		\textsc{Mse}_{\mathcal{D}}&=&\int_{\mathbf{x}\in \mathcal{X}}p(\mathbf{x},y)\int_{\mathcal{D}}(g(\mathbf{x};\mathcal{D}) - y)^2p(\mathcal{D})d\mathbf{x}d\mathcal{D}	\nonumber \\
		&=&\int_{\mathbf{x}\in \mathcal{X}}p(\mathbf{x},y)\int_{\mathcal{D}}\{g(\mathbf{x};\mathcal{D}) - \mathbb{E}_\mathcal{D}[g(\mathbf{x};\mathcal{D})] \}^2p(\mathcal{D})d\mathcal{D}d\mathbf{x} \label{eq:varians}\\
		&&+\int_{\mathbf{x}\in \mathcal{X}}p(\mathbf{x},y)\int_{\mathcal{D}}\{\mathbb{E}_\mathcal{D}[g(\mathbf{x};\mathcal{D})] - y\}^2p(\mathcal{D})d\mathcal{D}d\mathbf{x} \label{eq:next_seg}
	\end{eqnarray}
	It represents the expected squared error over all training datasets and all data points, and measures the model performance.  $\textsc{Mse}_{\mathcal{D}} \rightarrow 0$ as $g(\mathbf{x};\mathcal{D}) = y$ almost surely for any $\mathbf{x}$ and $\mathcal{D}$. Next we introduce $y^*$ to further decompose Eq~\eqref{eq:next_seg}, where $y^*=\mathbb E_y [y|\mathbf{x}]$ is the expectation of ground true label $y$ under the posterior distribution $p(y|\mathbf{x})$, corresponding to the Bayesian optimal prediction over all measurable functions.
	
	\begin{eqnarray}
		\label{eq:second_seg}
		&&\int_{\mathbf{x}\in \mathcal{X}}p(\mathbf{x},y)\int_{\mathcal{D}}\{\mathbb{E}_\mathcal{D}[g(\mathbf{x};\mathcal{D})] - y\}^2p(\mathcal{D})d\mathcal{D}d\mathbf{x} \nonumber \\
		&=& \int_{\mathbf{x}\in \mathcal{X}}p(\mathbf{x},y)\int_{\mathcal{D}}\{\mathbb{E}_\mathcal{D}[g(\mathbf{x};\mathcal{D})] - y^* + y^* - y\}^2p(\mathcal{D})d\mathcal{D}d\mathbf{x} \nonumber \\
		&=& \int_{\mathbf{x}\in \mathcal{X}}p(\mathbf{x},y)\int_{\mathcal{D}}\{\{\mathbb{E}_\mathcal{D}[g(\mathbf{x};\mathcal{D})] - y^*\}^2 + \{y^*-y\}^2 \nonumber \\
		&&+ 2\{\mathbb{E}_\mathcal{D}[g(\mathbf{x};\mathcal{D})] - y^*\}\{y^*-y\}p(\mathcal{D})\}d\mathcal{D}d\mathbf{x} \nonumber \\
		&=& \int_{\mathbf{x}\in \mathcal{X}}p(\mathbf{x},y)\int_{\mathcal{D}}\{\{\mathbb{E}_\mathcal{D}[g(\mathbf{x};\mathcal{D})] - y^*\}^2 + \{y^*-y\}^2 \}d\mathcal{D}d\mathbf{x} \label{eq:varnoisy}
	\end{eqnarray}
	Equation~\eqref{eq:varnoisy} is obtained by assuming that the noise $y^*-y$ is zero-mean, that is, $\int_{x\in \mathcal{X}}p(\mathbf{x},y)(y^*-y)d\mathbf{x} = 0$. By substituting Equation~\eqref{eq:varnoisy} back into Equation~\eqref{eq:next_seg}, we obtain the following expression
	\begin{eqnarray}
		\label{eq:second_result}
		\textsc{Mse}_{\mathcal{D}}	&=& \int_{\mathbf{x}\in \mathcal{X}}p(\mathbf{x},y)\int_{\mathcal{D}}\{g(\mathbf{x};\mathcal{D}) - \mathbb{E}_\mathcal{D}[g(\mathbf{x};\mathcal{D})] \}^2p(\mathcal{D})d\mathcal{D}d\mathbf{x} \label{eq:variance1}\\
		&+&\int_{\mathbf{x}\in \mathcal{X}}p(\mathbf{x},y)\int_{\mathcal{D}}\{\mathbb{E}_\mathcal{D}[g(\mathbf{x};\mathcal{D})] - y^*\}^2d\mathcal{D}d\mathbf{x} \label{eq:biase1}\\
		&+&\int_{\mathbf{x}\in \mathcal{X}}p(\mathbf{x},y)\int_{\mathcal{D}}\{y^*-y\}^2d\mathcal{D}d\mathbf{x}\label{eq:noise}
	\end{eqnarray}
	Eq~\eqref{eq:variance1} is variance measures the extent to which the prediction on individual data sets vary around their average. For fixed complexity of the functional class $\mathcal{G}$ and joint distribution $p(x,y)$ on $\mathcal{Z}=\mathcal{X}\times\mathcal{Y}$, the variance term is fixed. 
	
	Eq~\eqref{eq:biase1} is squared bias $\textsc{Bias}_\mathcal{D}$ that measures the extent to which the average prediction over all data sets differs from the optimal prediction. This is another formal expression of the popularity bias, which refers to the tendency of collaborative filtering models to recommend popular items excessively, as observed in many datasets. 
	
	Eq~\eqref{eq:noise} measures noisy, for fixed joint distribution $p(x,y)$ on $\mathcal{Z}=\mathcal{X}\times\mathcal{Y}$, it also fixed. So we have:
	\[\textsc{Mse}_{\mathcal{D}} = \textsc{Bias}_\mathcal{D}+const\]
	which completes the proof.
\end{proof}

\subsection{Proof of Proposition~\ref{theorem:metric}}\label{appendix:metric}
\begin{lemma}
	Let $g(\mathbf{x}) \in \{0,1\}$ and $y \in \{0,1\}$, that is, simplifying implicit collaborative filtering to a classification problem. With the result of Proposition~\ref{theorem:bias_mse}, we have
	\begin{eqnarray}
		bias \simeq \textsc{FPR}+\textsc{FNR}
	\end{eqnarray}
	\begin{proof}
		We apply the countable additivity property of Lebesgue integral to complete the proof. By setting $g(\mathbf{x}) \in \{0,1\}$ and $y \in \{0,1\}$, we can rewrite the integrand term as:
		\begin{eqnarray}\label{eq:interrand2}
			(g(\mathbf{x}) - y)^2 = \mathbb{I}(g(\mathbf{x})>y) + \mathbb{I}(g(\mathbf{x})<y)
		\end{eqnarray}
		where $\mathbb{I}(\cdot)$ is the indicator function, $\mathbb{I}(\cdot)=1$ if the augment is true otherwise 0.
		Then we insert Eq~\ref{eq:interrand2} back to  Eq~\ref{eq:biaseq} and omit the constant in Eq~\ref{eq:biaseq}, we have
		\begin{eqnarray}
			bias &\simeq&  \int_{\mathbf{x}\in {\mathcal{X}}} (g(\mathbf{x}) - y)^2 p(\mathbf{x}) d\mathbf{x}  \nonumber\\
			&=& \int_{\mathbf{x}\in {\mathcal{X}}}  [\mathbb{I}(g(\mathbf{x})>y) + \mathbb{I}(g(\mathbf{x})<y)] p(\mathbf{x}) d\mathbf{x}\nonumber\\
			&=& \int_{\mathbf{x}\in {\mathcal{X}}}  \mathbb{I}(g(\mathbf{x})>y)p(\mathbf{x}) d\mathbf{x} +\int_{\mathbf{x}\in {\mathcal{X}}} \mathbb{I}(g(\mathbf{x})<y) p(\mathbf{x}) d\mathbf{x} \label{eq:2subset}
		\end{eqnarray}
		The first term of Eq~\ref{eq:2subset} is the expectation of $\mathbb{I}(g(\mathbf{x})>y)$, which admits the empirical FPR estimate over the set of negative samples $\mathcal{S}_n\subset \mathcal{X}$: 
		\begin{eqnarray}
			\int_{\mathbf{x}\in \mathcal{X}}\mathbb{I}(g(\mathbf{x})>y) p(\mathbf{x}) d\mathbf{x} &=& \mathbb{E}_{\mathbf{x}} \mathbb{I}(g(\mathbf{x})>y) \nonumber \\ 
			&\simeq&	\frac{ \sum_{\mathbf{x}\in  \mathcal{S}_n} \mathbb I(g(\mathbf{x})>y)}{|\mathcal{S}_n|} \nonumber \\
			&=& \frac{\sum_{\mathbf{x}\in  \mathcal{S}_n}\mathbb I(g(\mathbf{x}) =1, y=0 )}{|\mathcal{S}_n|} \nonumber \\
			&=& FPR
		\end{eqnarray}
		Likewise, the second term of Eq~\ref{eq:2subset}
		\begin{eqnarray}
			\int_{\mathbf{x}\in \mathcal{X}}\mathbb{I}(g(\mathbf{x})<y) p(\mathbf{x}) d\mathbf{x} &=& \mathbb{E}_{\mathbf{x}} \mathbb{I}(g(\mathbf{x})<y) \nonumber \\ 
			&\simeq&	\frac{ \sum_{\mathbf{x}\in  \mathcal{S}_p} \mathbb I(g(\mathbf{x})<y)}{|\mathcal{S}_p|} \nonumber \\
			&=& \frac{\sum_{\mathbf{x}\in  \mathcal{S}_p}\mathbb I(g(\mathbf{x}) =0, y=1 )}{|\mathcal{S}_p|} \nonumber \\
			&=& FNR
		\end{eqnarray}
		which completes the proof. Furthermore, by using the similar algebraic operations as in Eq~\ref{eq:interrand2}, we can rewrite the integrand as follows:
		\begin{eqnarray}\label{eq:interrand4}
			(g(\mathbf{x}) - y)^2 = &&\mathbb{I}(g(\mathbf{x})>y, \mathbf{x}\in \mathcal{X}_\textsc{Hot}) + \mathbb{I}(g(\mathbf{x})>y, \mathbf{x}\in \mathcal{X}_\textsc{Cold}) \nonumber\\
			&+& \mathbb{I}(g(\mathbf{x})<y, \mathbf{x}\in \mathcal{X}_\textsc{Hot}) + \mathbb{I}(g(\mathbf{x})<y, \mathbf{x}\in \mathcal{X}_\textsc{Cold}) 
		\end{eqnarray}
		where $\mathcal{X}_\textsc{Hot} = \{\mathbf{x}|(u,i) \in \mathcal{X}, pop_i \geq \tau  \} $ is the set of samples with item popularity greater than a predefined threshold $\tau$ according to specific scenarios,  $\mathcal{X}_\textsc{Cold} = \mathcal{X} - \mathcal{X}_\textsc{Hot}$.
		So the integrating the first term of Eq~\eqref{eq:interrand4}
		\begin{eqnarray}
			\int_{\mathbf{x}\in \mathcal{X}}\mathbb{I}(g(\mathbf{x})>y, \mathbf{x}\in \mathcal{X}_\textsc{Hot}) p(\mathbf{x}) d\mathbf{x} &=& \mathbb{E}_{\mathbf{x}} \mathbb{I}(g(\mathbf{x})>y, \mathbf{x}\in \mathcal{X}_\textsc{Hot}) \label{eq:hotoverest}
		\end{eqnarray}
		Since only top-K recommended items $\mathcal{S}_\textsc{Rec}$ and popular items $\mathcal{S}_\textsc{Hot}$ are concerned, we take $\mathcal{S}_1 = \mathcal{S}_\textsc{Rec} \cap \mathcal{S}_\textsc{Hot}$ to obtain the empirical estimate of Eq~\ref{eq:hotoverest} 
		\begin{eqnarray}
			\textsc{OHR} &=&\textsc{H}\mathbb{E}_{\mathbf{x}} \mathbb{I}(g(\mathbf{x})>y, \mathbf{x}\in \mathcal{X}_\textsc{Hot}) \\
			&\simeq&	\frac{ \sum_{\mathbf{x}\in  \mathcal{S}_1} \mathbb I(g(\mathbf{x})>y,\mathbf{x}\in \mathcal{X}_\textsc{Hot})}{|\mathcal{S}_1|} \nonumber \\
			&=& \frac{\sum_{\mathbf{x}\in  \mathcal{S}_1}\mathbb I(g(\mathbf{x}) =1, y=0 , \mathbf{x}\in \mathcal{X}_\textsc{Hot})}{|\mathcal{S}_1|} \nonumber \\
			&=& \frac{\{\mathcal{S}_\textsc{Rec}-\mathcal{S}_\textsc{Test}\} \cap \mathcal{X}_\textsc{Hot} }{|\mathcal{S}_\textsc{Rec} \cap \mathcal{X}_\textsc{Hot}|} \label{eq:ORH}
		\end{eqnarray}
		Here, $\mathcal{S}_\textsc{Rec}-\mathcal{S}_\textsc{Test}$ is the set of items recommended to the user that they do not like, i.e., over-recommended items. So, the OHR metric evaluates \textbf{O}ver-recommended \textbf{H}ot item \textbf{R}ate. Likewise,
		\begin{eqnarray}
			\textsc{UHR} &=& \frac{\{\mathcal{S}_\textsc{Test}-\mathcal{S}_\textsc{Rec}\} \cap \mathcal{X}_\textsc{Hot} }{|\mathcal{S}_\textsc{Rec} \cap \mathcal{X}_\textsc{Hot}|} \label{eq:URH}
		\end{eqnarray}
		Here, $\mathcal{S}_\textsc{Test}-\mathcal{S}_\textsc{Rec}$ is the set of items that the user likes but were not recommended to them, i.e., under-recommended items. So UHR metric evaluates \textbf{U}nder-recommended \textbf{H}ot item \textbf{R}ate. The same analysis applies for cold items:
		\begin{eqnarray}
			\textsc{OCR} &=&\frac{\{\mathcal{S}_\textsc{Rec}-\mathcal{S}_\textsc{Test}\} \cap \mathcal{X}_\textsc{Cold} }{|\mathcal{S}_\textsc{Rec} \cap \mathcal{X}_\textsc{Cold}|} \label{eq:ORC}\\
			\textsc{UCR} &=&\frac{\{\mathcal{S}_\textsc{Test}-\mathcal{S}_\textsc{Rec}\} \cap \mathcal{X}_\textsc{Cold} }{|\mathcal{S}_\textsc{Rec} \cap \mathcal{X}_\textsc{Cold}|} \label{eq:URC}
		\end{eqnarray}
		OCR metric evaluates \textbf{O}ver-recommended \textbf{C}old item \textbf{R}ate,  UCR metric evaluates \textbf{U}nder-recommended \textbf{C}old item \textbf{R}ate.
	\end{proof}
\end{lemma}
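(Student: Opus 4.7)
The plan is to start from the bias-MSE identity of Proposition~\ref{theorem:bias_mse}, $\textsc{Bias}_\mathcal{D} = \textsc{Mse}_\mathcal{D} - \text{const}$, and then exploit the binary restriction $g(\mathbf{x}), y \in \{0,1\}$ to convert the squared-loss integrand into a sum of two disjoint indicator events corresponding exactly to false positives and false negatives. Concretely, first I would write the bias, up to the additive constant absorbing the variance and the irreducible noise, as an expectation of $(g(\mathbf{x})-y)^2$ against $p(\mathbf{x},y)$; the single training set $\mathcal{D}$ suffices because once we pass to a hard-labeled classifier the averaging over $\mathcal{D}$ collapses into the sample-level misclassification event.

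Second, I would use the elementary algebraic identity available when both $g(\mathbf{x})$ and $y$ take values in $\{0,1\}$:
\begin{eqnarray}
(g(\mathbf{x}) - y)^2 = \mathbb{I}(g(\mathbf{x}) > y) + \mathbb{I}(g(\mathbf{x}) < y), \nonumber
\end{eqnarray}
since the two indicator events are mutually exclusive and together cover every misclassification. Substituting into the integral and invoking countable additivity of the Lebesgue integral (i.e.\ linearity of expectation) splits bias into two separate expectations, one over the event $\{g=1, y=0\}$ and one over the event $\{g=0, y=1\}$.

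Third, I would replace each expectation by its natural empirical counterpart: the first expectation, restricted to the negative sub-population $\mathcal{S}_n=\{\mathbf{x}:y=0\}$, is exactly the false-positive rate $\textsc{FPR}=|\{\mathbf{x}\in\mathcal{S}_n: g(\mathbf{x})=1\}|/|\mathcal{S}_n|$, and symmetrically the second expectation, restricted to $\mathcal{S}_p=\{\mathbf{x}:y=1\}$, yields the false-negative rate $\textsc{FNR}$. Summing these two empirical quantities gives the claimed $\textsc{FPR}+\textsc{FNR}$ approximation.

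The main obstacle is making the $\simeq$ symbol rigorous: there are two distinct sources of approximation, and both need to be acknowledged cleanly. First, the constant in Proposition~\ref{theorem:bias_mse} (the variance term and the Bayes-optimal noise term) is dropped, which is legitimate because it is independent of the choice of $g$ and so does not affect comparisons of bias across models. Second, the passage from $\int \mathbb{I}(g(\mathbf{x})>y)\,p(\mathbf{x})d\mathbf{x}$ to an empirical rate implicitly renormalizes by $|\mathcal{S}_n|$ (respectively $|\mathcal{S}_p|$) rather than by the total sample count, which amounts to conditioning on the class label; this reweighting is what turns the bias integral into the familiar FPR/FNR quantities and must be stated explicitly when writing the chain of equalities. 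Once these two points are made precise, the remainder of the argument is a mechanical substitution, and the same template extends to the four popularity-aware metrics by intersecting each indicator with $\mathbf{x}\in\mathcal{X}_\textsc{Hot}$ or $\mathbf{x}\in\mathcal{X}_\textsc{Cold}$.
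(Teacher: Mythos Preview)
Your proposal is correct and follows essentially the same route as the paper's own proof: rewrite $(g(\mathbf{x})-y)^2$ as the sum of the two indicator events, split the integral by additivity, and identify each term empirically with $\textsc{FPR}$ and $\textsc{FNR}$, then refine by intersecting with $\mathcal{X}_\textsc{Hot}$ and $\mathcal{X}_\textsc{Cold}$. Your explicit discussion of the two approximation sources (dropping the constant and the class-conditional renormalization) is in fact more careful than the paper, which leaves these implicit.
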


\subsection{Proof of Theorem~\ref{theorem:auc}}\label{Appendix:auc}
\begin{theorem}[Partial AUC Maximization]
	Denote the posterior probability of $\mathbf{x}\in \textsc{Tn}$ as $ \mathbb{P}(\textsc{Tn}|\mathbf{x})$. Given training set $\mathcal{D}= \mathcal{D}^+ \cup \mathcal{D}^-$, for each unit decrement $dg(\mathbf{x})=-1$, the following negative sampling strategy  
	\begin{eqnarray}
		\mathbf{x} 
		&=& \arg \max _\mathbf{x} \triangle_{\mathbf{x}^+}\cdot \mathbb{P}(\textsc{Tn}|\mathbf{x}) -\triangle_{\mathbf{x}^-}\cdot \mathbb{P}(\textsc{Fn}|\mathbf{x}) \label{eq:negsampling}
	\end{eqnarray}
	is the optimal sampling rule that maximizes the partial AUC metric over training set $\mathcal{D}$, where $\triangle_{\mathbf{x}^+}=\sum_{\mathbf{x}^+ \in \mathcal{D}^+}[1- \sigma(g(\mathbf{x}^+) - g(\mathbf{x}))]$ and $\triangle_{\mathbf{x}^-} = \sum_{\mathbf{x}^-\in \mathcal{N}} [1- \sigma(g(\mathbf{x}) - g(\mathbf{x}^-))]$, and $ \mathcal{N}\subseteq \mathcal{D}^-$ is the set of top-$\lfloor \gamma\cdot|\mathcal{D}^-| \rfloor$ ranked negative instances.
	\begin{proof}
		Following [ref], we replace the indicator function of $0-1$ loss with a differentiable logistic surrogate loss $\ln \sigma(\cdot)$, since the indicator function $\mathbb{I}(\cdot)$ is non-differentiable. Therefor, for any unlabeled instance $\mathbf{x}$, the first order Taylor expansion of pAUC with respect to $g(\mathbf{x})$ is:
		\begin{eqnarray}
			pAUC^{(t+1)} = pAUC^{(t)} + \frac{\partial pAUC}{\partial g(\mathbf{x})} dg(\mathbf{x}) + o (dg(\mathbf{x}) )
		\end{eqnarray}
		The above equation means the increment of  $\triangle pAUC =pAUC^{(t+1)} -pAUC^{(t)} $ caused by a unit change of $dg(\mathbf{x})$ due to the updating of model parameter. Since the label of $\mathbf{x}$ is unknown, there are two cases for the derivative  of pAUC in Eq~\ref{eq:auc}  with respect to $g(\mathbf{x})$, resulting in a piecewise function with respect to its ground truth label
		\begin{equation}
			\frac{\partial pAUC}{\partial g(\mathbf{x})} =\left\{
			\begin{array}{cl}
				-\frac{1}{|\mathcal{D}^+||\mathcal{N}^-|}\sum_{\mathbf{x}^+ \in \mathcal{D}^+} [1- \sigma(g(\mathbf{x}^+) - g(\mathbf{x}))] ,& \mathbf{x} \in \textsc{Tn} \\
				\frac{1}{|\mathcal{D}^+||\mathcal{N}^-|}\sum_{\mathbf{x}^- \in \mathcal{N}^-} [1- \sigma(g(\mathbf{x}) - g(\mathbf{x}^-))] ,&  \mathbf{x} \in \textsc{Fn} \\
			\end{array} \right.
		\end{equation}
		For a unit decrease in $g(\mathbf{x})=-1$, the expected increase of AUC over the ground truth label of $\mathbf{x}$ is:
		\begin{eqnarray}
			\mathbb{E}_{c(\mathbf{x})}\triangle pAUC  &=&  \mathbb{E}_{c(\mathbf{x})} \frac{\partial pAUC}{\partial g(\mathbf{x})} \nonumber \\
			&=& \frac{1}{|\mathcal{D}^+||\mathcal{N}^-|} \sum_{\mathbf{x}^+} [1- \sigma(g(\mathbf{x}^+) - g(\mathbf{x}))] \mathbb{P}(\textsc{Tn}|\mathbf{x})- \sum_{\mathbf{x}^-} [1- \sigma(g(\mathbf{x}) - g(\mathbf{x}^-))] \mathbb{P}(\textsc{Fn}|\mathbf{x})
		\end{eqnarray}
		Given $pAUC^{(t)}$ at $t$ epoch, to maximize the $pAUC^{(t+1)}$ 
		\begin{eqnarray}
			\mathbf{x} 		&=&	\arg \max _\mathbf{x}  pAUC^{(t+1)} \nonumber \\
			&=&	\arg \max _\mathbf{x}  pAUC^{(t+1)} - pAUC^{(t)} \nonumber \\
			&=& \arg \max _\mathbf{x} \mathbb{E}_{c(\mathbf{x})}\triangle pAUC  \nonumber \\
			&=& \arg \max _\mathbf{x} \triangle_{\mathbf{x}^+}\cdot \mathbb{P}(\textsc{Tn}|\mathbf{x}) -\triangle_{\mathbf{x}^-}\cdot \mathbb{P}(\textsc{Fn}|\mathbf{x}) 
		\end{eqnarray}
	\end{proof}
\end{theorem}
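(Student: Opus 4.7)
The plan is to view the negative sampling rule as a one-step greedy maximiser of pAUC: drawing an unlabeled instance $\mathbf{x}$ and treating it as a negative in the loss causes a gradient update that performs (approximately) a unit decrement $dg(\mathbf{x})=-1$, so the \emph{best} $\mathbf{x}$ to sample is the one whose decrement produces the largest expected increase in pAUC. To make this differentiable, I would first replace the indicator $\mathbb{I}(\cdot)$ in the pAUC definition~\eqref{eq:auc} by the logistic surrogate $\ln\sigma(\cdot)$, which is standard in pairwise learning-to-rank and matches the surrogate used in the AUC-optimisation literature cited in the paper.

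Next I would perform a first-order Taylor expansion of the surrogate pAUC in the single coordinate $g(\mathbf{x})$ and compute $\partial\mathrm{pAUC}/\partial g(\mathbf{x})$. The key observation is that $\mathbf{x}$ contributes to the pairwise sum only through a slot that depends on its (unknown) latent label $c(\mathbf{x})$: if $\mathbf{x}\in\textsc{Tn}$ it appears as the negative in pairs $(\mathbf{x}^+,\mathbf{x})$, yielding a term proportional to $-\triangle_{\mathbf{x}^+}$, while if $\mathbf{x}\in\textsc{Fn}$ it instead appears as the positive in pairs $(\mathbf{x},\mathbf{x}^-)$ with $\mathbf{x}^-\in\mathcal{N}$, yielding a term proportional to $+\triangle_{\mathbf{x}^-}$. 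Thus the partial derivative is a two-branch piecewise function in $c(\mathbf{x})$, with a common normaliser $1/(|\mathcal{D}^+||\mathcal{N}|)$.

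Since $c(\mathbf{x})$ is unobserved, I would then take expectation under the posterior, using $\mathbb{P}(\textsc{Fn}|\mathbf{x})=1-\mathbb{P}(\textsc{Tn}|\mathbf{x})$. Multiplying by $dg(\mathbf{x})=-1$ and dropping the positive normalising constant turns the expected change $\mathbb{E}_{c(\mathbf{x})}[\Delta\mathrm{pAUC}]$ into $\triangle_{\mathbf{x}^+}\mathbb{P}(\textsc{Tn}|\mathbf{x})-\triangle_{\mathbf{x}^-}\mathbb{P}(\textsc{Fn}|\mathbf{x})$; maximising this scalar over the candidate pool directly produces the sampling rule~\eqref{eq:rule}.

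The hard part will be handling the top-$\lfloor\gamma|\mathcal{D}^-|\rfloor$ restriction defining $\mathcal{N}$: this set is itself a function of the current scores, and in principle a change in $g(\mathbf{x})$ could promote or demote $\mathbf{x}$ within $\mathcal{N}$, so the objective is a max-type function whose derivative is strictly a subgradient. The cleanest way to bypass this is to argue that for an infinitesimal update $\mathcal{N}$ is locally frozen almost everywhere in score space, so the single-coordinate Taylor expansion above is valid off a measure-zero set of tie configurations. A secondary concern is justifying the surrogate-to-indicator passage, which I would resolve by appealing to the standard consistency of the logistic surrogate for AUC-type ranking losses.
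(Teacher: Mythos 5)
Your proposal follows essentially the same route as the paper's proof: replace the indicator with the logistic surrogate $\ln\sigma(\cdot)$, take a first-order Taylor expansion in the single coordinate $g(\mathbf{x})$, split the derivative into the two latent-label cases (yielding $-\triangle_{\mathbf{x}^+}$ for $\textsc{Tn}$ and $+\triangle_{\mathbf{x}^-}$ for $\textsc{Fn}$), average under the posterior, and take the argmax. The differences are only in rigor, and in your favor: you handle the sign flip from $dg(\mathbf{x})=-1$ explicitly and you flag the score-dependence of the truncated set $\mathcal{N}$, both of which the paper's proof silently glosses over.
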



\end{document}